  \providecommand\BibTeX{{%
    \normalfont B\kern-0.5em{\scshape i\kern-0.25em b}\kern-0.8em\TeX}}}
\newtheorem{problem}{Problem}
\newcommand{\algname}{\textsc{odeN}} 
\begin{document}

\title{\algname: Simultaneous Approximation of Multiple Motif Counts in Large Temporal Networks}


\author{Ilie Sarpe}
\affiliation{%
    \department{Department of Information Engineering}
    \institution{University of Padova}
    \city{Padova}
    \country{Italy}
}
\email{sarpeilie@dei.unipd.it}

\author{Fabio Vandin}
\affiliation{%
    \department{Department of Information Engineering}
    \institution{University of Padova}
    \city{Padova}
    \country{Italy}
}
\email{fabio.vandin@unipd.it}


\begin{abstract}
Counting the number of occurrences of small connected subgraphs, called \emph{temporal motifs}, has become a fundamental primitive for the analysis of \emph{temporal networks}, whose edges are annotated with the time of the event they represent. One of the main complications in studying temporal motifs is the large number of motifs that can be built even with a limited number of vertices or edges.
As a consequence, since in many applications motifs are employed for exploratory analyses, the user needs to iteratively select and analyze several motifs that represent different aspects of the network, resulting in an inefficient, time-consuming process. This problem is exacerbated in large networks, where the analysis of even a single motif is computationally demanding. As a solution, in this work we propose and study the problem of \emph{simultaneously} counting the number of occurrences of \emph{multiple} temporal motifs, all corresponding to the same (static) topology (e.g., a triangle). Given that for large temporal networks computing the exact counts is unfeasible, 
we propose \algname, a sampling-based algorithm that provides an accurate approximation of \emph{all} the counts of the motifs. We provide analytical bounds on the number of samples required by \algname\ to compute rigorous, probabilistic, relative approximations. Our extensive experimental evaluation shows that \algname\ enables the approximation of the counts of motifs in temporal networks in a fraction of the time needed by state-of-the-art methods, and that it also reports more accurate approximations than such methods.
\end{abstract}


\begin{CCSXML}
    <ccs2012>
    <concept>
    <concept_id>10002950.10003648.10003671</concept_id>
    <concept_desc>Mathematics of computing~Probabilistic algorithms</concept_desc>
    <concept_significance>500</concept_significance>
    </concept>
    </ccs2012>
    <ccs2012>
    <concept>
    <concept_id>10003752.10003809.10003635</concept_id>
    <concept_desc>Theory of computation~Graph algorithms analysis</concept_desc>
    <concept_significance>500</concept_significance>
    </concept>
    </ccs2012>
\end{CCSXML}

\ccsdesc[500]{Mathematics of computing~Probabilistic algorithms}
\ccsdesc[500]{Theory of computation~Graph algorithms analysis}

\keywords{temporal motifs, sampling algorithm, temporal networks, randomized algorithm}


\maketitle

\section{Introduction}\label{sec:intro}

Networks are ubiquitous representations that model a wide range of real-world systems, such as social networks \cite{Cho2011}, citation networks \cite{Ding2011Citation}, biological systems~\cite{Girvan2002Networks}, and many others~\cite{Newman2010Networks}. One of the most fundamental primitives in network analysis is the mining of \emph{motifs} \cite{ShenOrr2002, Milo2002, Milo2004Superfamilies} (or \emph{graphlets} \cite{Przulj2007, Bressan2019}), which requires to count the occurrences of small connected subgraphs of $k$ nodes. Motifs represent key building blocks of networks, and they provide useful insights in wide range of applications such as network classification \cite{Shervashidze2009, Milenkovic2008}, network clustering \cite{Baumes2005}, and community detection \cite{Batagelj2003}. 

Modern networks contain rich information about their edges or nodes~\cite{Zong2015, Rossi2021, Kosyfaki2018, Ceccarello2017} in addition to graph structure. One of the most important information is the time at which the interactions, represented by edges, occur. Networks for which such information is available are called \emph{temporal} \cite{Holme2012Temporal,Holme2019Temporal}; novel insights about the underlying \emph{dynamics} of the systems can be uncovered by the analysis of such networks \cite{Kumar2006, Kovanen2013Gender, Kumar2018}. In recent years, many primitives~\cite{Kovanen2011, Hulovatyy2015, paranjape2017motifs, Schwarze2020} have been proposed as counterpart, in temporal networks, to the study of subgraph patterns for nontemporal networks, with each primitive capturing different temporal aspects of a network. One of the most important such primitives is the study of \emph{temporal motifs} \cite{paranjape2017motifs}. Temporal motifs are small connected subgraphs with $k$ nodes and $\ell$ edges occurring with a prescribed order within a time interval of duration $\delta$. Temporal motifs describe the patterns shaping interactions over the network, e.g., networks from similar domains tend to have similar temporal motif counts \cite{paranjape2017motifs}, and their analysis is useful in many applications, e.g., anomalies detection \cite{Belth2020Persistent}, network classification \cite{Tu2019Classification}, and social networks~\cite{boekhout2019efficiently}. 

The temporal dimension poses several challenges in the analyses of motifs. A major challenge is represented by the large number of temporal motifs that can be build even with a limited number of vertices and edges. For example, even considering directed (and connected) temporal motifs with only 3 vertices and 3 edges, there are 32 such motifs. In several domains
when motifs are studied in the exploratory analysis of a temporal network it is almost impossible for the data analyst to known \emph{a priori} which motif is the most interesting and useful.
In social networks, a set of 3 vertices 
represents the smallest non trivial community, and different temporal motifs with 3 vertices describe different patterns of interactions in such community.
Hence, studying all such motifs can provide novel insights on the interactions within such communities.
In network classification, considering the counts of all the 32 motifs with 3 vertices and 3 edges lead to models with improved accuracy~ \cite{Tu2019Classification}.

However, since state-of-the-art approaches for general temporal motifs only allow the analysis of one motif at the time, the user needs to \emph{iteratively} select and analyze the various motifs, resulting in an inefficient and time consuming process, in particular for large networks.

In this paper, we define and study the problem of simultaneously counting the occurrences of various temporal motifs. In particular, we consider all motifs corresponding to the same \emph{static} target template (e.g., all triangles - see Fig.~\ref{subfig:highlevelexp}). This problem is extremely challenging, since computing the count of even a single temporal motif is NP-Hard in general \cite{liu2019sampling}, with existing state-of-the-art approaches having complexity exponential in the number of edges of the motif to obtain even a single motif's count \cite{liu2019sampling, Wang2020Efficient, Sarpe2021PRESTO}.  

The task of counting temporal motifs is hindered by the sheer size of modern datasets and, therefore, scalable techniques are needed to deal with such amount of data. Since exact approaches~\cite{paranjape2017motifs, Mackey2018Temporal, Gurukar2015} are impractical,  rigorous and efficient approximation algorithms providing tight guarantees are needed. In this work we develop \algname, a sampling algorithm that provides a high quality approximation for the problem of counting multiple temporal motifs with the same static topology. 
Our main contributions are as follows:
\begin{itemize}
    \item We propose the \emph{motif template counting problem}, where, given a temporal network, a $k$-node target template graph $H$, the number $\ell$ of edges of each temporal motif, and a bound $\delta$ on the duration of the temporal motifs, the problem requires to output all the counts of the temporal motifs whose static topology corresponds to $H$ and having exactly $\ell$ temporal edges, occurring within $\delta$-time.
    \item We propose \algname, a randomized sampling algorithm providing a high quality approximation for the motif template counting problem. \algname’s approach is to sample a set of motif occurrences, ensuring that they all share the same static topology $H$.  Thus, \algname\ takes advantage of the constraint that all motifs must share a common target template $H$, aggregating the computation of all motif counts in a sample. 
    \algname's approximation, as in other data mining applications, is controlled by two parameters $\varepsilon, \eta$, which control respectively the quality and the confidence of the approximations.
    \item We show a tight and efficiently computable bound on the number of samples required by \algname\ for the approximation to be within $\varepsilon$ error with confidence $>1-\eta$ for \emph{all} temporal motif’s counts.
    \item We perform large scale experiments using datasets with up to billions of temporal edges, showing that \algname\ requires a fraction of the time required by state-of-the-art approximation algorithms for single motif counts, and that it reports sharper estimates.
    We then provide a parallel implementation of \algname\ displaying almost linear speedup in many configurations. We also show how \algname\ provides novel insights on the dynamics of a real-world temporal network. 
\end{itemize}

\section{Preliminaries}\label{sec:prelims}
In this section we introduce the basic notions that we will use throughout the work, and we define the computational problem of counting multiple temporal motifs sharing a common target template graph.
We start by defining temporal networks.

\begin{definition}
    \label{defn:temporal_graph}
    A \emph{temporal network} is a pair $T=(V,E)$ where, $V=\{v_1, \dots , v_n\}$ and $E=\{(x,y,t):x,y \in V, x \neq y, t \in \mathbb{R^{+}}\}$ with $|V|=n$ and $|E|=m$.
\end{definition}
Given $(x,y, t) \in E$, we say that $t$ is the \emph{timestamp} of the directed edge $(x,y)$.
Given a temporal network $T$, by ignoring the timestamps of its edges we obtain the associated \emph{undirected projected static network}, defined as follows.
\begin{definition}
    \label{defn:projected_graph}
    The \emph{undirected 
        projected static network} of a temporal network $T=(V,E)$ is the pair $G_T=(V,E_T)$ 
    that is an undirected 
    network, such that $E_T=\{ \{x,y\} : (x,y,t) \in E \}$.
\end{definition}

We will often use the term \emph{static network} to denote a network whose edges are without timestamps.
Next we introduce the definition of \emph{temporal motifs} as defined by Paranjape et al. \cite{paranjape2017motifs}, which are small, connected subgraphs representing patterns of interest.

\begin{definition}\label{defn:temporal-motif}
    A \emph{$k$-node $\ell$-edge temporal motif} $M$ is a pair $M = (\mathcal{K}, \sigma)$ where $\mathcal{K}=(V_\mathcal{K}, E_\mathcal{K})$ is a directed and weakly connected \emph{multigraph} where $V_{\mathcal K} =\{v_1, \dots , v_k\}$, $E_\mathcal{K}=\{(x,y):x,y \in V_\mathcal{K}, x \neq y\}$ s.t.\  $|V_\mathcal{\mathcal{K}}|=k$ and $|E_\mathcal{K}|=\ell$, and $\sigma$ is an ordering of $E_\mathcal{K}$.
\end{definition} 

\begin{figure}[t]
    \centering
    \subfloat[]{
        \begin{tabular}{lc}
            \includegraphics[width=.28\linewidth]{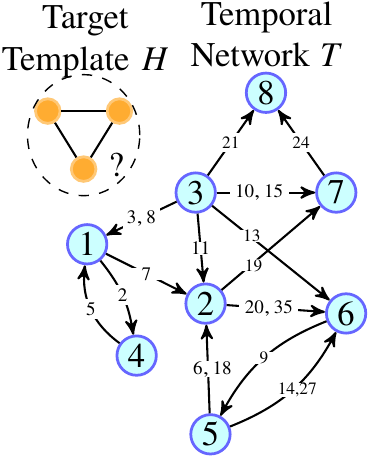} &  \includegraphics[width=.48\linewidth]{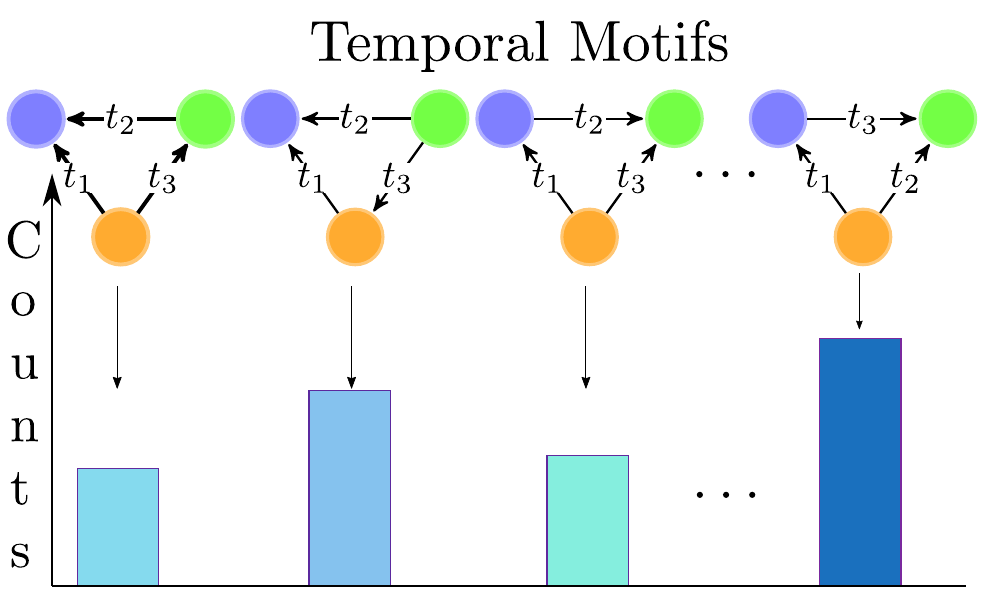} 
        \end{tabular}
        \label{subfig:highlevelexp}
    }\\
    \subfloat[]{
        \includegraphics[width=.3\linewidth]{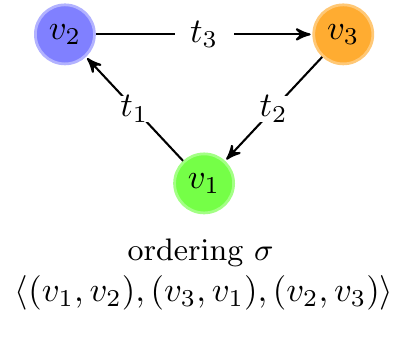}
        \label{subfig:tmotif}
    }
    \subfloat[]{
        \includegraphics[width=.38\linewidth]{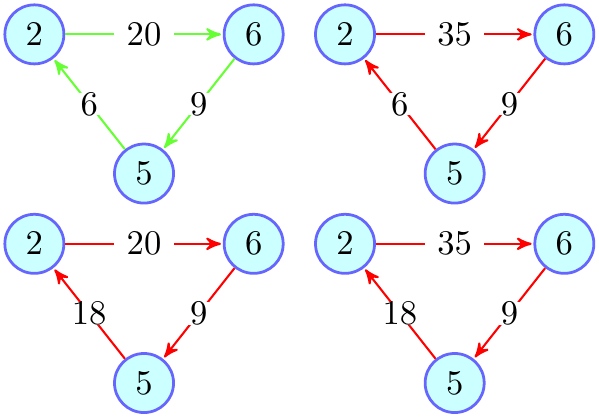}
        \label{subfig:instances}
    }
    \caption{(\ref{subfig:highlevelexp}): Motif template counting problem overview: given a temporal network and a (static) target template, compute the counts of all temporal motifs that map on the template. (\ref{subfig:tmotif}): Temporal motif, with $k=3, \ell=3$, and its ordering $\sigma$. (\ref{subfig:instances}): Sequences of edges of the network in (\ref{subfig:highlevelexp}) among nodes $\{2,5,6\}$ that map topologically on the motif in (\ref{subfig:tmotif}). For $\delta= 15$ only the green sequence is a $\delta$-instance of the motif, since the timestamps respect $\sigma$ and $t_{\ell}'-t_1' = 20-6 \le \delta$. The red sequences are not $\delta$-instances, since they do not respect such constraint or do not respect the ordering $\sigma$.}
    \label{fig:basicdef}
\end{figure}

Note that a $k$-node $\ell$-edge temporal motif $M = (\mathcal{K}, \sigma)$ is also identified by the sequence $\langle(x_1,y_1) , \dots, (x_{\ell},y_{\ell})\rangle$ of edges ordered according to $\sigma$; we will often use such representation for a motif $M$ (see Fig.\ (\ref{subfig:tmotif}) for an example). Given a $k$-node $\ell$-edge temporal motif $M$, the values of $k$ and $\ell$ are determined by $V_\mathcal{K}$ and $E_\mathcal{K}$. We will therefore use the term \emph{temporal motif}, or simply \emph{motif}, when $k$ and $\ell$ are clear from context. Given a temporal motif $M= ((V_\mathcal{K}, E_\mathcal{K}), \sigma)$, we  denote with $G_{u}[M]$ the undirected graph corresponding to the underlying undirected graph structure of the multigraph $\mathcal{K}$ of $M$, that is  $G_{u}[M] = (V_\mathcal{K}, E_M^{u}) $ where $E_M^{u} = \{\{x,y\} : (x,y)\lor (y,x)\in E_\mathcal{K} \}$ (i.e., $E_M^{u}$ is the \emph{set} of undirected edges associated to the multiset $E_\mathcal{K}$). Notice that directed edges of the form $(x,y),(y,x)$ as well as multiple directed edges $(x,y),(x,y),\dots$ from $E_{\mathcal{K}}$ are represented by the same undirected edge $\{x,y\}$ in $E_M^u$. 

For a fixed temporal motif $M$, we are interested in identifying its realizations in $T$ appearing within at most $\delta$-time \emph{duration}, as captured by the following definition.

\begin{definition}\label{defn:delta-instance}
    Given a temporal network $T=(V,E)$ and $\delta \in \mathbb{R^{+}}$, a time ordered sequence $S=\langle(x'_{1}, y'_{1}, t'_{1}), \dots, (x'_{\ell}, y'_{\ell}, t'_{\ell})\rangle$ of $\ell$ unique temporal edges from  $T$ is a $\delta$\textit{-instance} of the temporal motif $M=\langle(x_1,y_1),\dots,(x_{\ell},y_{\ell})\rangle$  if:
    \begin{enumerate}
        \item there exists a bijection $f$ on the vertices such that $f(x'_{i}) = x_i$ and $f(y'_{i}) = y_i, \, i=1, \dots , \ell$; and
        \item the edges of $S$ occur within $\delta$ time, i.e., $t'_{\ell} - t'_{1} \leq \delta$.
    \end{enumerate} 
\end{definition}

Exploring different values of $\delta$ in the above definition often leads to different insights on the temporal network that may be discovered through the analysis of the motifs~\cite{Holme2012Temporal,Panzarasa2009Patterns, Kovanen2011, Bajardi2011}. Note that in a $\delta$-instance of the temporal motif $M=(\mathcal{K},\sigma)$ the edge timestamps must be sorted according to the ordering $\sigma$ (see Fig.\ (\ref{subfig:instances}) for an example). In fact, $\sigma$ plays a key role in defining a temporal motif, with different orderings of the same multigraph $\mathcal{K}$ reflecting diverse dynamic properties captured by the motif.

For a given directed multigraph $\mathcal{K}$ with $|E_{\mathcal{K}}|=\ell$ edges, in general not all the $\ell!$ orderings of its edges define \emph{distinct} temporal motifs. We therefore introduce the following equivalence relation.
\begin{definition}\label{defn:motifs-isorphism}
    Let $M_1$ and $M_2$ be two temporal motifs. 
    Let $M_1  = \langle(x_1^1,y_1^1), \dots,  (x_{\ell}^1,y_{\ell}^1)\rangle$, and  $M_2=\langle(x_1^2,y_1^2),$ $ \dots,  (x_{\ell}^2,y_{\ell}^2)\rangle$ be the sequences of edges of $M_1$ and $M_2$, respectively. We say that $M_1$ and $M_2$ are \emph{not distinct} (denoted with $M_1 \cong_\tau M_2$) if there exists a bijection $g$ on the vertices such that $g(x_{i}^1) = x_i^2$ and $g(y_{i}^1) = y_i^2, \, i=1, \dots , \ell$.
\end{definition}

We  provide an example of the definition above in Figure \ref{fig:congruentMotifs}.

Given two networks (undirected or temporal) $G,G'$ we say that $G'=(V',E')$ is a \emph{subgraph} of $G=(V,E)$ (denoted with $G'\subseteq G$) if $V'\subseteq V $ and $E'\subseteq E$. 
Note that we require a subgraph to be \emph{edge} induced.
To conclude the preliminary notions, we recall the definition of \emph{static graph isomorphism}. 
\begin{definition}\label{defn:imorph_def}
    Given two graphs $G=(V_G,E_G)$ and $H=(V_H,E_H)$ we say that the two graphs are \emph{isomorphic}, denoted with $G\simeq H$ if and only if there exists a bijection $f: V_G \mapsto V_H$ on the vertices such that $ e=(u,v)\in E_G \Leftrightarrow e'=(f(u), f(v)) \in E_H$.
\end{definition}

Let $\mathcal{U}(M,\delta)=\{I: I$ is a $\delta$-instance of $M \}$ be the \textit{set of (all) $\delta$-instances} of the motif $M$ in $T$. The \emph{count} of $M$ is $C_M(\delta) = |\mathcal{U}(M,\delta)|$, denoted with $C_M$ when $\delta$ is clear from the context.

Given a static undirected graph $H$, which we call the \emph{target template}, we are interested in solving the problem of computing the number of $\delta$-instances of all temporal motifs with $\ell$ edges and all corresponding to the same static graph $H$. More formally, given the \emph{target template} $H = (V_{H},E_{H})$, which is a simple and connected graph, and $\ell \geq |E_H| \in \mathbb{Z}_{+}$, let $\mathcal{M}(H,\ell)$ be the set of \emph{distinct} temporal motifs with $\ell$ edges whose underlying undirected graph structure corresponds to $H$, 
that is $\mathcal{M}(H,\ell)$ contains motifs $M_i = ((V_\mathcal{K}^i, E_\mathcal{K}^i), \sigma_i)$, $i=1,2,\dots$, such that i) $G_{u}[M_i] \simeq H$; ii) 
$|E_\mathcal{K}^i| = \ell$; and iii) $M_i \ncong_\tau M_j, \forall j\neq i$.

Let us explain intuitively the constrains above. First, $H$ imposes a constraint on the \emph{undirected} static topology the temporal motifs of interest (that are directed subgraphs) should have. That is, it requires all the motifs to have the same underlying graph structure ($G_{u}[M]$), which must be isomorphic to $H$. This is a useful way to represent multiple related temporal motifs. For example, in social network analysis by fixing $H$ as an undirected triangle we consider in $\mathcal{M}(H,\ell)$ all temporal motifs that characterize the communication between groups of three friends (i.e., each motif will represent a \emph{different} form of communication among all such groups~\cite{paranjape2017motifs}).
The second constraint requires each motif $M_i \in \mathcal{M}(H,\ell)$ to have \emph{exactly} $\ell \ge |E_H|$ edges, with $\ell$ provided in input by the user. Fixing the parameter $\ell$ is motivated by the fact that motifs with different values of $\ell$ (even with the same target template structure $H$) reflect different patterns of interaction (e.g, a group of friends that exchanges $\ell=3$ or $\ell=4$ messages). As we will show empirically in Section~\ref{subsec:caseStudy}, such counts vary significantly with $\ell$ for fixed $H$ and $\delta$. Finally, the third constraint ensures that we only count distinct motifs, i.e., motifs representing different patterns. 

We now define the motif template counting problem.

\begin{problem} \label{problem:TGMC}
    \textbf{Motif template counting problem.} Given a temporal network $T$, a static \emph{undirected} target graph $H = (V_{H}, E_{H})$, $\ell \in \mathbb{Z}_{+},  \ell \geq |E_H|$, and a parameter $\delta \in \mathbb{R}_{+}$, find the counts $C_{M_i}(\delta)$ of motifs $M_i \in  \mathcal{M}(H,\ell), i=1,\dots, |\mathcal{M}(H,\ell)|$ in $T$.
\end{problem}


\begin{figure}[t]
    \centering
    \begin{tabular}{cc}
        \includegraphics[width=.4\linewidth]{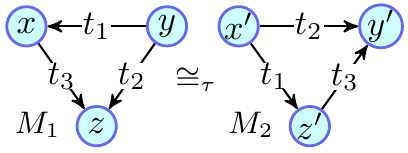} &
        \includegraphics[width = .4\linewidth]{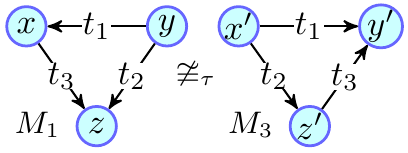}
    \end{tabular}
    \caption{(Left): The two motifs are \emph{not distinct}: let $\sigma_1 = \langle(y,x), (y,z),(x,z)\rangle$ and $\sigma_2 = \langle(x',z'), (x',y'),(z',y')\rangle$ corresponding to $M_1$ and $ M_2$, then the function $f:V_\mathcal{K}^1\mapsto V_\mathcal{K}^2$ defined by $f(x) = z', f(y) = x', f(z) = y'$ preserves both the topology and the ordering as from Definition \ref{defn:motifs-isorphism}. (Right): The two motifs are \emph{distinct} since there is no map $f:V_\mathcal{K}^1\mapsto V_\mathcal{K}^3$ preserving both the topology and ordering.}
    \label{fig:congruentMotifs}
\end{figure}

We now provide an example of the different motifs to be counted for different values of $\ell$ with a fixed target template $H$.
\begin{example}
    Let $H=(\{v_1,v_2\}, \{ \{v_1,v_2\} \})$, that is, the target template is an edge. Let $e_1=(v_1,v_2)$ and $e_2=(v_2,v_1)$. By varying $\ell\in\{2,3\}$ the motifs in $\mathcal{M}(H,\ell)$, for which we want to compute the counts, are: $M_1 = \langle e_1,e_1 \rangle $ and $M_2 = \langle e_1,e_2 \rangle$ for $\ell=2$ (i.e., $|\mathcal{M}(H,2)|=2$) while $M_1 = \langle e_1,e_1,e_1 \rangle, M_2= \langle e_1,e_2,e_1 \rangle , M_3= \langle e_1,e_2,e_2 \rangle  , M_4= \langle e_1,e_1,e_2 \rangle$ for $\ell=3$ (i.e., $|\mathcal{M}(H,3)|=4$).
\end{example}

Since solving the counting problem \emph{exactly} is NP-Hard in general\footnote{The hardness depends on the topology of the motif. For example for triangles and single edges there exist polynomial time-algorithms, even if they are impracticable on very large networks. Interestingly,   counting temporal star-shaped motifs is NP-Hard~\cite{liu2019sampling}, while on static networks such motifs can be counted in polynomial time.} even for one single temporal motif, we aim at providing high-quality approximations to the motif counts as follows.

\begin{problem} \label{problem:Approximation}
    \textbf{Motif template approximation problem.} Given the input parameters of Problem \ref{problem:TGMC} and additional parameters $\varepsilon \in \mathbb{R}_{+}, \eta \in (0,1)$, compute approximations $C_{M_i}'(\delta)$ of counts $C_{M_i}(\delta)$ of motifs $M_i \in \mathcal{M}(H,\ell), i=1,\dots, |\mathcal{M}(H,\ell)|$, such that $\mathbb{P}[\exists i \in\{1,\dots,|\mathcal{M}(H,\ell)| \} :  |C_{M_i}'(\delta) -C_{M_i}(\delta)| \geq \varepsilon C_{M_i}(\delta) ] \leq \eta$, that is $C_{M_i}'(\delta)$ is a relative $\varepsilon$-approximation to the count $C_{M_i}(\delta)$ with probability $\ge 1 - \eta$ for all $i=1,\dots,|\mathcal{M}(H,\ell)|$ simultaneously.
\end{problem}

\section{Related Works}\label{sec:previousWorks}
Much work has been done on enumerating and approximating $k$-node motifs in (nontemporal) networks. We refer the interested reader to the surveys \cite{ribeiro2019survey,yu2020motif}. However, such works cannot be easily adapted to temporal motifs since they do not properly account for the temporal information~\cite{paranjape2017motifs, Holme2012Temporal}.
Many different definitions of temporal networks and temporal patterns have been proposed:  here we will focus only on those works that are relevant for our work, the interested reader may refer to~\cite{Masuda2016Guide, Holme2019Temporal, Holme2012Temporal, Jazayeri2020MotifSurvey} for a more general overview.

Our work builds on the work of Paranjape et al.~\cite{paranjape2017motifs} which first introduced the definition of temporal motif used here, and the problem of counting single temporal motifs. The authors provided a general algorithm for counting a single temporal motif by enumerating all the subsequences of edges that map on a single static subgraph.
Their approach is not feasible on large datasets since it requires exhaustive enumeration of \emph{all} subgraphs of the undirected projected static network  $G_T$ that are isomorphic to the target template $H$. The authors also proposed efficient algorithms and data-structures for counting $3$-node $3$-edge motifs, which may be used for the exact counting subroutines within \algname\ sampling framework. In addition to the algorithmic contributions, the authors also showed that networks from similar domains tend to exhibit similar temporal motif counts. They also showed how motif counts can provide significant insights on the communication patterns in many networks, highlighting the importance of studying temporal motifs in temporal networks.

Other \emph{exact} algorithms have been proposed for the problem of counting a single motif, or for slightly different problems.  
Mackey et al.~\cite{Mackey2018Temporal} presented a backtracking algorithm for counting a single temporal motif that can be use for any motif.
Boekhout et al.~\cite{boekhout2019efficiently} developed exact algorithms for counting temporal motifs in multilayer temporal networks (i.e., each edge is a tuple $(x,y,t,a)$ with $a$ denoting the layer of each edge), they also discuss efficient data-structures for counting 4-node 4-edge motifs, which may also be adapted for the exact counting subroutines in our sampling framework \algname. 
Being exact, both such algorithms do not scale on massive datasets due large  time and memory requirements.


Several approximation algorithms have been proposed in recent years for estimating the count of a \emph{single} motif. Liu et al.~\cite{liu2019sampling} proposed a temporal-partition based sampling approach. 
Wang et al.~\cite{Wang2020Efficient} introduced a sampling-based algorithm that selects temporal edges with a fixed probability specified by the user.
Lastly, Sarpe and Vandin~\cite{Sarpe2021PRESTO} proposed PRESTO, an algorithm based on uniform sampling of small  windows of the temporal network $T$.
All such sampling algorithms can be used to analyze a single temporal motif but become inefficient as the number of motifs to be counted grows, such as in Problem~\ref{problem:Approximation}. In fact, they cannot leverage the additional information that all motifs $M_1,\dots,M_{|\mathcal{M}(H,\ell)|}$ must share a common static topology isomorphic to $H$. As stated in Section~\ref{sec:intro}, when analysing a temporal networks it is hard to know \emph{a-priori} which motif is representing important functions for the network, therefore one often relies on testing all possible orderings $\sigma$ over one fixed target template $H$ for fixed $\ell,\delta$~\cite{paranjape2017motifs, Tu2019Classification} (as in Prob.\ \ref{problem:TGMC}) resulting in a time consuming and inefficient procedure. Our approach instead supports the direct analysis of \emph{multiple} temporal motifs, enabling the study of hundreds of temporal motifs on massive networks in a very limited time.

\section{\algname}
In this section we present \algname, our algorithm to address the motif template approximation problem (Prob.\ \ref{problem:Approximation}). We start in Section~\ref{subsec:Overview} with an overview of \algname. We then describe the algorithm in Section~\ref{subsec:generalAlg}, analyze its time complexity in Section~\ref{subsubsec:computComplex} and its theoretical guarantees, including an efficiently computable bound on the number of samples required to obtain the desired probabilistic guarantees, in Section~\ref{sec:approx_analysis}.

\subsection{Overview of \algname}\label{subsec:Overview}
Our algorithm \algname\ estimates of the counts of motifs in $\mathcal{M}(H,\ell)$. The main idea is to avoid the explicit generation all the motifs $M_i \in \mathcal{M}(H,\ell), i=1,\dots,|\mathcal{M}(H,\ell)|$ to count them one at the time as it is required by existing algorithms that approximate a single motif count.  \algname\ instead leverages the fact that the topology of all motifs must to be isomorphic to the target template $H$, by reusing the computation while estimating the motif counts. 

An overview of the main strategy adopted by our algorithm is presented in Figure \ref{fig:pipeline}. Given the input parameters of Problem~\ref{problem:Approximation}, where $H$ is the target template, the idea behind our procedure is to consider the undirected static projected graph $G_T$ of the input temporal network $T$ and proceed as follows: i) find a set of subgraphs in the static graph $G_T$ that are isomorphic to $H$ 
by first \emph{sampling} an edge $e_R$ of $G_T$ with some probability $p_{e_R}$, where $p_{e_R}$ depends, potentially, on $e_R$ and the temporal network $T$, and then enumerating all subgraphs of $G_T$ isomorphic to $H$ and containing $e_R$; ii) for each such subgraph, consider the corresponding temporal subgraph and 
compute all the counts of the subsequences of  $\ell$ edges occurring within $\delta$-time in such temporal subgraph; iii) for each such subsequence identified, find the corresponding motif in $\mathcal{M}(H,\ell)$, for which the subsequence is a $\delta$-instance of, and update a count for each motif identified; 
iv) weight each motif count opportunely in order to maintain an unbiased estimate of global motif counts;
v) repeat steps i)-iv) a sufficient number of iterations to guarantee the desired $(\varepsilon, \eta)$-approximation (see Problem~\ref{problem:Approximation}). 

\begin{figure}[t]
    \centering
    \includegraphics[width =1\linewidth]{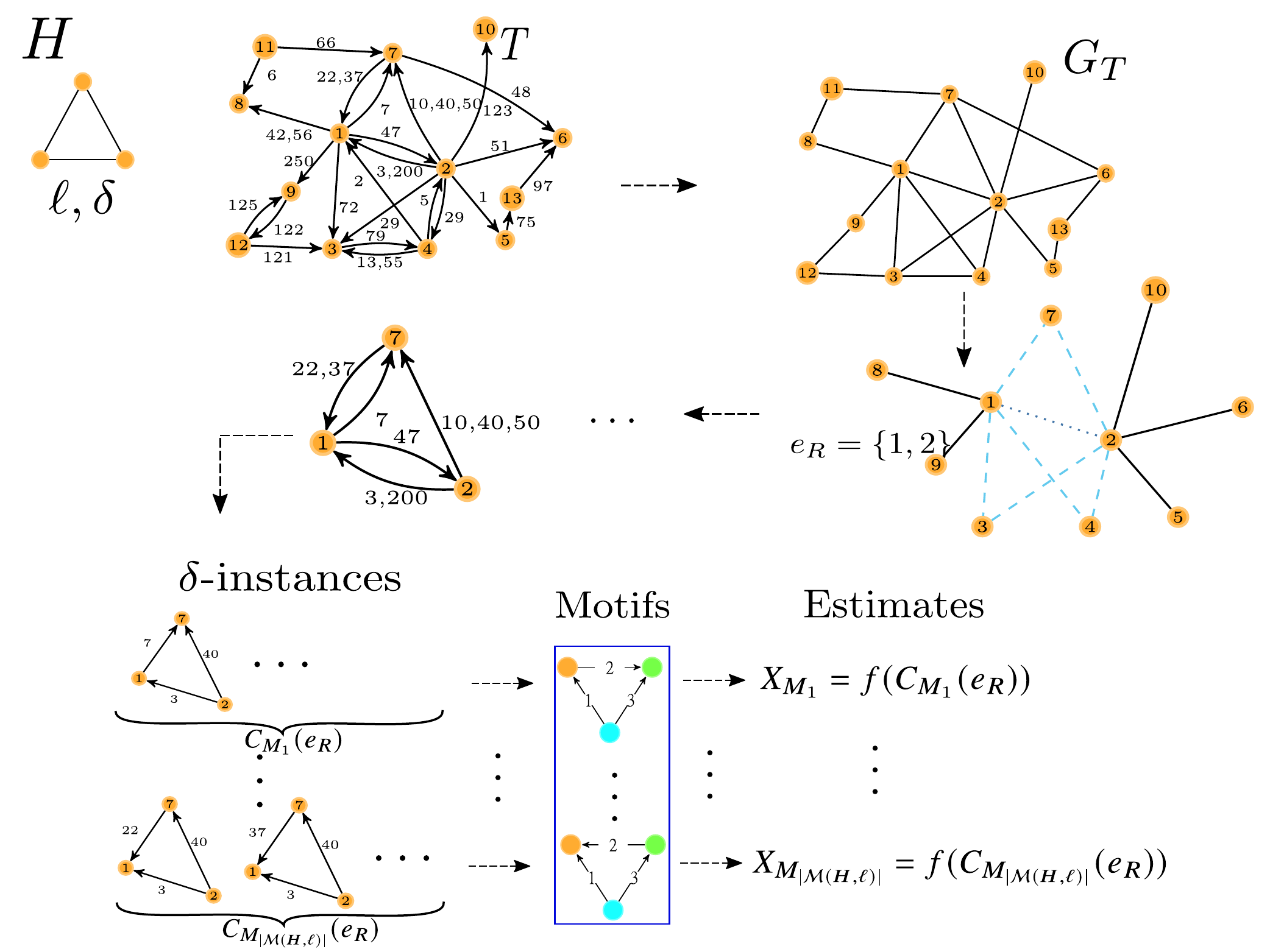}
    
    \caption{Overview of \algname's approximation strategy. Let $H$ be a triangle, and $\ell=3, \delta=40$. \algname\ first collects the static projected network $G_T$, then samples an edge $e_R\in G_T$ randomly ($e_R=\{1,2\}$ in the figure) and enumerates all the subgraphs of $G_T$ isomorphic to $H$ containing $e_R$. For each subgraph it collects the corresponding temporal network, counts the $\delta$-instances of the motifs, and combines the different counts to obtain unbiased estimates of motif counts. This procedure is repeated to obtain concentrated estimates.}
    \label{fig:pipeline}
\end{figure}

\subsection{Algorithm Description} \label{subsec:generalAlg}
\algname\ is described in Algorithm~\ref{alg:samplebystatic}. It first computes $G_T=(V,E_T)$, the undirected projected static graph of $T$ (line~\ref{alglinemain:staticgraph}), and initializes $C_{estimates}$ (line~\ref{alglinemain:mapestimates}) used to store the estimates of motif counts, which are used to compute the estimators $C_{M_i}', i=1,\dots,|\mathcal{M}(H,\ell)|$. Then it repeats $s$ times (line~\ref{alglinemain:forS}) the following procedure: i) pick a random edge $e_R$ from $G_T$ (line \ref{alglinemain:sample edge}) according to some probability distribution over the edges of $E_T$; ii) enumerate all the subgraphs $h$ of $G_T$ such that $h\simeq H$ and $e_R \in h$ (line~\ref{alglinemain:exactsubgriso}); note that this enumeration step is local to $e_R$; iii) for each such $h$ (line~\ref{alglinemain:forStaticSubgraph}), collect the corresponding temporal graph, i.e., all edges in $T$ for which their static projected edge is an edge of $h$ (line~\ref{alglinemain:instancetemporalgraph}), sort the sequence of edges of such graph by increasing timestamps and apply some pruning criteria (lines \ref{alglinemain:sortsequence}-\ref{alglinemain:continueprune});  iv) if the sequence is not pruned, then update the estimates of the number of $\delta$-instances of each temporal motif by calling the routine \texttt{FastUpdate} (line \ref{alglinemain:fastupdatecall}). \texttt{FastUpdate} features an efficient implementation of the general algorithm by Paranjape et al.~\cite{paranjape2017motifs}, for which we devised efficient encodings of the motifs within integers through bitwise operations. 
Such function updates $C_{estimates}$ in order to maintain for each motif the count that will be used to output its unbiased estimate (see Appendix \ref{app:subroutines}).
Let $C_{M_i}(e)$ be the number of $\delta$-instances in $T$ of $M_i, i=1,\dots,|\mathcal{M}(H,\ell)|$ whose undirected projected static network contains edge $e\in G_T$.
\texttt{FastUpdate} updates the estimate of the counts for each motif $M_i$ by summing its unbiased estimate  obtained at the $j$-th iteration (i.e., $X_{M_i}^j=C_{M_i}(e_R)/(|E_H|p_{e_R})$). Once the procedure is repeated $s$ times, for each motif $M_i \in \mathcal{M}(H,\ell), i=1,\dots,|\mathcal{M}(H,\ell)|$, \algname\ computes the final estimate $C_{M_i}'=\frac{1}{s} \sum_{j=1}^s X_{M_i}^j$ where $ X_{M_i}^j = \frac{1}{|E_H|} \sum_{e \in G_T} C_{M_i}(e)X_e/p_e$ is the estimate obtained at the $j$-th iteration (with $X_e$ being a bernoulli random variable denoting if edge $e\in G_T$ is sampled at the $j$-th iteration, s.t. $\mathbb{P}[X_e =1] = p_e$) and outputs it together with the motif (we output $\sigma_i$ over the node-set $V_H$) (lines \ref{alglinemain:fixestimateM}-\ref{alglinemain:output}). We show in Lemma \ref{lemma:unbiased} that \algname\ outputs unbiased estimates for all the motif counts.

We briefly discuss the pruning criteria used in line~\ref{alglinemain:continueprune}. Given a candidate temporal graph $S$ for which $G_S\simeq H$ holds, we check in linear time if $S$ can contain a $\delta$-instance of a motif or not:  since $S$ is already sorted by increasing timestamps (see line \ref{alglinemain:sortsequence}), we efficiently check if there are at least $\ell$ edges within $\delta$-time. If not, then we prune the sequence (since by definition a $\delta$-instance of a motif with $k$-nodes, and $\ell$-edges must have $\ell$ edges occurring within $\delta$-time). We thus avoid calling the subroutine \texttt{FastUpdate}, which has an exponential complexity in general (see Section \ref{subsubsec:computComplex}), on  $S$. 

\begin{algorithm}[t]
    \DontPrintSemicolon
    \SetKwComment{Comment}{$\triangleright$\ }{}
    \LinesNumbered 
    \KwIn{$T=(V,E),H=(V_H,E_H), \delta, s, \ell$}
    \KwOut{ $(M_i, C_{M_i}'), i=1,\dots,{|\mathcal{M}(H,\ell)|}$ where $C_{M_i}'$ is an estimate of $C_{M_i}$ for the motifs in $\mathcal{M}(H,\ell)$.}
    $G_{T} = (V,E_T) \gets \texttt{UndirectedStaticProjection}(T)$\label{alglinemain:staticgraph}\;
    $C_{estimates} \gets \{\}$\label{alglinemain:mapestimates}\;
    \For{$j \gets 1$ \text{to} $s$\label{alglinemain:forS}}
    {
        $e_{R} = \{x_R, y_R\} \gets \texttt{RandomEdge}(p(e): e \in E_T)$ \label{alglinemain:sample edge}\;
        $\mathcal{H} \gets \{h \subseteq G_T : h\simeq H, \{x_R, y_R\} \in h\}$\label{alglinemain:exactsubgriso}\; 
        \ForEach{$h \in \mathcal{H}$ \label{alglinemain:forStaticSubgraph}}
        {
            $S \gets \{(x,y,t),(y,x,t) \in E: \{x, y\}\in h \}$\label{alglinemain:instancetemporalgraph}\;
            \texttt{SortInPlace}($S$) \label{alglinemain:sortsequence}\Comment*[r]{By increasing timestamps}
            \If{\emph{\texttt{*Pruning criteria are \emph{not} met*}} \label{alglinemain:continueprune}}{$\texttt{FastUpdate}(\delta, S, C_{estimates}, p(e_R), H)$ \label{alglinemain:fastupdatecall}\;}

        }
    }
    \ForEach{$(M,X_M) \in C_{estimates}$}
    {
        $C_{M}' \gets \frac{X_{M}}{s}$\label{alglinemain:fixestimateM}\;
        \textbf{output} $(M, C_{M}')$\label{alglinemain:output}\;

    }
    \caption{\algname}\label{alg:samplebystatic}
\end{algorithm}

We now discuss the probability distribution used to sample a random edge $e_R$ from $G_T$ (line \ref{alglinemain:sample edge}), while we describe the subroutine \texttt{FastUpdate} that updates the motif estimates at each iteration (line \ref{alglinemain:fastupdatecall}) and the algorithms employed for the static enumeration in Appendix \ref{app:subroutines} for space constraints (Sections~\ref{subsec:FastUpdate} and \ref{subsubsec:exactRoutines}).

Since our final estimate is an average over $s$ samples of the variables $X_{M_i}^j, i=1,\dots,|\mathcal{M}(H,\ell)|, j=1,\dots,s$, and given that $X_{M_i}^j$ is an unbiased estimate (see Lemma~\ref{lemma:unbiased}) the final estimate is also a consistent estimator (i.e., it converges to $C_{M_i}$ as $s\to \infty$) if each edge has a positive probability of being sampled\footnote{More formally it is only necessary to assign to each $\delta$-instance a known positive sampling probability.}. Thus \emph{any} probability mass assigning positive probabilities on edges can be adopted. We considered different distributions over the edges of $E_T$:
\begin{enumerate}
    \item \emph{Uniform}: $p_e = 1/|E_T|, e \in E_T$;
    \item \emph{Static degree based}: $p_{e} = d(e)/(\sum_{e' \in E_T} d(e')), e \in E_T $ where $d(e=\{x,y\}) = d(x)+d(y)$ is the degree of the edge as sum of the degree of its nodes $x,y \in V$ in $G_T$;
    \sloppy{\item \emph{Temporal degree based}: $p_{e} = \phi(e)/(\sum_{e' \in E_T} \phi(e'))$ with $\phi(e=\{x,y\})=|\{t: \exists(x,z,t) \lor (z,x,t) \in E\}|+|\{t: \exists(z,y,t) \lor (y,z,t) \in E, z \neq x\}|, e\in E_T$;}
    \item \emph{Temporal edge weight based}: $p_{e=\{x,y\}} = |\{(x,y,t), (y,x,t) \in E \}|/m, e\in E_T$;
\end{enumerate}

We empirically found the distribution (4) to be the fastest to converge for small number $s$ of iterations, thus we use it in our ana\-lysis. We observe that many other candidate distributions can be designed (e.g., combining two of those already listed with weights $\xi, 1-\xi, \xi \in (0,1)$) making our framework extremely versatile.

We conclude by summarizing some nice properties of our algorithm:
1) it computes the estimates only for the temporal motifs occurring in the input temporal network $T$ (except for the very unpractical case where the motifs in $\mathcal{M}(H,\ell)$ have all zero counts) 
without generating all the possible candidates, while existing sampling techniques require to first generate all the candidates and then to execute the algorithms on such candidates, even for motifs with zero counts;
2) it takes advantage of the constraint that all motifs share the same underlying topology ($H$), saving computation when estimating the different counts;
3) it is trivially parallelizable: all the $s$ iterations can be executed in parallel;
4) it can easily use most of the fast state-of-the-art subgraph enumeration algorithms developed for the exact subgraph isomorphism problem (see Appendix \ref{subsubsec:exactRoutines}).

\subsection{Time Complexity}\label{subsubsec:computComplex}
In this section we briefly describe the time complexity of \algname. \algname\ needs to compute the probabilities $p(e)$ of edges in advance, which requires a $O(|E_T|)$ preprocessing step. Interestingly, this step does not depend on the target template $H$, so it can be reused for different target templates $H$. One of the most expensive steps in Algorithm~\ref{alg:samplebystatic} is the local enumeration to identify the set $\mathcal{H}$ which in general requires exponential time (line~\ref{alglinemain:exactsubgriso}). For specific topologies this step can be implemented very efficiently with symmetry breaking conditions and min-degree expansion. For example, if $H$ is a triangle this ``local" enumeration to $e_R=\{x_R,y_R\}$ can be done in $O(\min({d_{x_R},d_{y_R}}) )$ time. Let $|\mathcal{H}^*|$ be the maximum cardinality of a set of subgraphs isomorphic to $H$ and adjacent to an edge in $G_T$. Let $|S^*|$ denote the maximum cardinality of a set $S$ collected (in line \ref{alglinemain:instancetemporalgraph}) by our algorithm \algname. Sorting $S^*$ requires $O(|S^*|\log|S^*|)$ time. The subroutine \texttt{FastCount} has a complexity dominated by $O((|S^*|+\ell) |E_H|^{\ell})$ (see~\cite{paranjape2017motifs} and App.~\ref{subsec:FastUpdate} for more details). So overall the complexity of our procedure is
$O(|E_T| + s(\zeta_{enum} + |\mathcal{H}^*|(|S^*|\log(|S^*|) + |E_H|^{\ell}(|S^*|+\ell) )))$, where $\zeta_{enum}$ is the time required by the static enumerator used as subroutine to compute the set $\mathcal{H}^*$. Such step in general is exponential in the number of edges of $|E_T|$ and depends on the exact technique used as subroutine. The final complexity accounts for the cycle (in line \ref{alglinemain:forS}) that is repeated $s$ times. The parallel version of our algorithm, which executes the cycle of line \ref{alglinemain:forS} in parallel on $\omega$ processing units available, leads to a time complexity of $O(|E_T| + s/\omega(\zeta_{enum} + |\mathcal{H}^*|(|S^*|\log(|S^*|) + |E_H|^{\ell}(|S^*|+\ell) )))$. 

\subsection{Theoretical Guarantees} \label{sec:approx_analysis}
In this section we present the theoretical guarantees provided by \algname. 
All proofs are provided in Appendix \ref{app:proofs}.

Recall that our algorithm outputs, for each motif $M_i \in \mathcal{M}(H,\ell), i=1,\dots,|\mathcal{M}(H,\ell)|$, the following estimate: \sloppy{$C_{M_i}' = \frac{1}{s} \sum_{j=1}^s X_{M_i}^j= \frac{1}{s|E_H|}$ $ \sum_{j=1}^s\sum_{e \in G_T} C_{M}(e)X_e/p_e$.} The following shows that such estimates are unbiased estimates of $C_{M_i}, i=1,\dots,|\mathcal{M}(H,\ell)|$.
\begin{lemma}\label{lemma:unbiased}
    For each motif-count pair $(M_i, C_{M_i}')$ reported in output by \algname, $C_{M_i}'$ is an unbiased estimate to $C_{M_i}$, that is $ \mathbb{E}[C_{M_i}'] = C_{M_i}$
\end{lemma}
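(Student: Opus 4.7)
The plan is a standard linearity-of-expectation argument combined with a double-counting identity that exploits the fact that every $\delta$-instance's projected topology is isomorphic to $H$ and therefore has exactly $|E_H|$ distinct undirected edges. The main thing I want to verify cleanly is the single-iteration estimator $X_{M_i}^j$; once that is unbiased, $C_{M_i}' = \frac{1}{s}\sum_{j=1}^{s} X_{M_i}^j$ is unbiased by linearity.

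First I would fix a motif $M_i \in \mathcal{M}(H,\ell)$ and look at a single iteration $j$. In that iteration, the algorithm draws one edge $e_R \in E_T$ according to the distribution $(p_e)_{e \in E_T}$, so the indicator variables $X_e = \mathbf{1}[e_R = e]$ satisfy $\sum_{e \in E_T} X_e = 1$ and $\mathbb{E}[X_e] = p_e$. The estimator produced in that iteration is
\[
X_{M_i}^j \;=\; \frac{1}{|E_H|}\sum_{e \in E_T} C_{M_i}(e)\,\frac{X_e}{p_e},
\]
where $C_{M_i}(e)$ is the number of $\delta$-instances of $M_i$ whose projected static subgraph contains $e$. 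By linearity of expectation,
\[
\mathbb{E}\bigl[X_{M_i}^j\bigr] \;=\; \frac{1}{|E_H|}\sum_{e \in E_T} C_{M_i}(e)\,\frac{\mathbb{E}[X_e]}{p_e} \;=\; \frac{1}{|E_H|}\sum_{e \in E_T} C_{M_i}(e).
\]

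The key step is now the double-counting identity $\sum_{e \in E_T} C_{M_i}(e) = |E_H|\,C_{M_i}$. Every $\delta$-instance $I$ of $M_i$ has an undirected projected static graph that is isomorphic to $H$ (by definition of $\mathcal{M}(H,\ell)$ and since $I$ is a $\delta$-instance of $M_i$), hence it contributes $1$ to $C_{M_i}(e)$ for each of the exactly $|E_H|$ undirected edges $e \in E_T$ in its projection, and $0$ to the others. Summing over $e$ thus counts each of the $C_{M_i}$ instances exactly $|E_H|$ times, giving the claimed identity. Substituting back yields $\mathbb{E}[X_{M_i}^j] = C_{M_i}$.

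Finally, since the $s$ iterations are independent and identically distributed, $\mathbb{E}[C_{M_i}'] = \frac{1}{s}\sum_{j=1}^{s} \mathbb{E}[X_{M_i}^j] = C_{M_i}$, which is what we needed. I do not foresee a real obstacle; the only delicate point is being careful that the estimator only ever attributes to a $\delta$-instance a contribution through one of the undirected edges of its static projection (so that the coefficient $1/|E_H|$ is the correct normalization), and this is precisely what the constraint $G_u[M_i] \simeq H$ in the definition of $\mathcal{M}(H,\ell)$ ensures regardless of whether $M_i$ contains reciprocal or repeated directed edges.
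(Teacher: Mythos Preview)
Your proof is correct and follows essentially the same approach as the paper: compute $\mathbb{E}[X_{M_i}^j]$ by linearity and $\mathbb{E}[X_e]=p_e$, invoke the double-counting identity $\sum_{e\in E_T} C_{M_i}(e)=|E_H|\,C_{M_i}$, and then average over the $s$ iterations. The only cosmetic difference is that the paper states the double-counting identity as a preliminary remark rather than justifying it inside the proof, whereas you spell out why each $\delta$-instance contributes to exactly $|E_H|$ terms; both arguments are the same in substance.
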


Let $\alpha = \min_{\{x,y\}\in E_T}\{|\{(x,y,t),(y,x,t)\in E\}|\}$, i.e., the minimum number of temporal edges of $T$ that map on an edge in $G_T$. We now give an upper bound to the variance of the estimates provided by Algorithm \ref{alg:samplebystatic} for each motif reported in output.
\begin{lemma}\label{lemma:variancebound}
    For each motif-count pair $(M_i, C_{M_i}')$ reported in output by \algname, it holds $\text{Var}[C_{M_i}'] \le   \frac{C_{M_i}^2}{s}\left(\frac{m}{\alpha|E_H|} -1\right)$
\end{lemma}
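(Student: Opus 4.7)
The plan is to exploit the fact that the final estimator is an empirical mean of $s$ i.i.d.\ random variables, so that $\mathrm{Var}[C_{M_i}'] = \mathrm{Var}[X_{M_i}^1]/s$, reducing the task to bounding the second moment of a single-iteration estimator. I would first write $\mathrm{Var}[X_{M_i}^1] \le \mathbb{E}[(X_{M_i}^1)^2] - C_{M_i}^2$, using Lemma~\ref{lemma:unbiased} for the mean. Then, since exactly one edge $e_R$ is drawn per iteration with probability $p_{e_R}$, the second moment expands as
\begin{equation*}
\mathbb{E}[(X_{M_i}^1)^2] \;=\; \sum_{e \in E_T} p_e \cdot \frac{C_{M_i}(e)^2}{|E_H|^2 p_e^2} \;=\; \frac{1}{|E_H|^2}\sum_{e \in E_T} \frac{C_{M_i}(e)^2}{p_e}.
\end{equation*}

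Next I would plug in the chosen distribution (4), $p_e = w(e)/m$ where $w(e) = |\{(x,y,t),(y,x,t)\in E\}|$ for $e=\{x,y\}$. By definition of $\alpha$, $w(e) \ge \alpha$ for every $e\in E_T$, so $1/p_e \le m/\alpha$. This yields the upper bound
\begin{equation*}
\mathbb{E}[(X_{M_i}^1)^2] \;\le\; \frac{m}{\alpha |E_H|^2} \sum_{e \in E_T} C_{M_i}(e)^2.
\end{equation*}

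The main (but mild) combinatorial step is to control $\sum_e C_{M_i}(e)^2$ in terms of $C_{M_i}$. Here I would use two facts: (i) every $\delta$-instance of $M_i$ projects to exactly $|E_H|$ distinct undirected edges of $G_T$ (since $G_u[M_i]\simeq H$), hence by double counting $\sum_{e \in E_T} C_{M_i}(e) = |E_H|\, C_{M_i}$; and (ii) trivially $C_{M_i}(e) \le C_{M_i}$ for every edge. Combining these,
\begin{equation*}
\sum_{e \in E_T} C_{M_i}(e)^2 \;\le\; \Bigl(\max_{e} C_{M_i}(e)\Bigr) \sum_{e \in E_T} C_{M_i}(e) \;\le\; C_{M_i} \cdot |E_H|\, C_{M_i} \;=\; |E_H|\, C_{M_i}^2.
\end{equation*}

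Plugging back gives $\mathbb{E}[(X_{M_i}^1)^2] \le m C_{M_i}^2 /(\alpha |E_H|)$, so $\mathrm{Var}[X_{M_i}^1] \le C_{M_i}^2\bigl(m/(\alpha|E_H|) - 1\bigr)$, and dividing by $s$ concludes. I do not anticipate a real obstacle: the only nontrivial step is the bound on $\sum_e C_{M_i}(e)^2$, and it is immediate once the double-counting identity $\sum_e C_{M_i}(e) = |E_H|\,C_{M_i}$ is stated. The proof is essentially a direct computation modulo that identity.
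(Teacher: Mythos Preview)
Your proposal is correct and follows essentially the same approach as the paper: reduce to the variance of a single iteration by independence, compute the second moment as $\frac{1}{|E_H|^2}\sum_e C_{M_i}(e)^2/p_e$, bound $1/p_e\le m/\alpha$ via distribution~(4), and then use the double-counting identity $\sum_e C_{M_i}(e)=|E_H|\,C_{M_i}$ together with $C_{M_i}(e)\le C_{M_i}$ to get $\sum_e C_{M_i}(e)^2\le |E_H|\,C_{M_i}^2$. The paper phrases that last step via $\lambda_e=C_{M_i}(e)/C_{M_i}\in[0,1]$ and $\lambda_e^2\le\lambda_e$, which is exactly your argument in different notation.
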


To give a bound on the number $s$ of samples required by \algname\ to output a $\varepsilon$-approximation that holds on all motifs in output with probability $>1-\eta$, we combine Bennett's inequality \cite{Bennett1962}, an advanced result on the concentration of sums for independent random variables as reported in \cite{Sarpe2021PRESTO}, with a union bound, obtaining the following main result.

\begin{theorem}\label{theo:boundsamplesmultiple} 
    Let $s$ be the number of iterations of \algname, let $\varepsilon \in \mathbb{R}^+$, and $\eta \in (0,1)$. If $s \ge \left(\frac{ m}{\alpha|E_H|} -1\right) \frac{1}{(1+\varepsilon)\ln(1+\varepsilon)-\varepsilon} \ln\left(\frac{2 |\mathcal{M}(H,\ell)| }{\eta}\right)$ then
    \begin{equation*}
    \mathbb{P}[\exists i \in \{1,\dots,|\mathcal{M}(H,\ell)|\}: |C_{M_i}' - C_{M_i}| \ge \varepsilon C_{M_i}] \le \eta.
    \end{equation*}
\end{theorem}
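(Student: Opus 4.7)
My plan is to apply Bennett's inequality to a single motif, then take a union bound over all motifs in $\mathcal{M}(H,\ell)$. Fix an arbitrary motif $M_i \in \mathcal{M}(H,\ell)$. The estimate $C_{M_i}' = \frac{1}{s}\sum_{j=1}^s X_{M_i}^j$ is an average of $s$ i.i.d.\ copies of $X_{M_i}^j = \frac{1}{|E_H|}\sum_{e \in G_T} C_{M_i}(e) X_e / p_e$, where the independence comes from the fact that the $s$ iterations of the outer \textbf{for} loop of Algorithm~\ref{alg:samplebystatic} are independent. By Lemma~\ref{lemma:unbiased}, $\mathbb{E}[X_{M_i}^j] = C_{M_i}$, and by Lemma~\ref{lemma:variancebound}, $\text{Var}[X_{M_i}^j] = s\,\text{Var}[C_{M_i}'] \le C_{M_i}^2\bigl(\tfrac{m}{\alpha|E_H|}-1\bigr)$.

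Next I need a deterministic upper bound $B_j$ on $X_{M_i}^j$. In a single iteration exactly one edge $e_R$ is sampled, so $X_{M_i}^j = C_{M_i}(e_R)/(|E_H|\,p_{e_R})$. Using the edge-weight based distribution we have $p_e \ge \alpha/m$ by the definition of $\alpha$, and clearly $C_{M_i}(e_R) \le C_{M_i}$ since a $\delta$-instance containing $e_R$ is in particular a $\delta$-instance. Hence $X_{M_i}^j \le C_{M_i}\cdot \frac{m}{\alpha|E_H|} =: B_j$. Setting $K := \frac{m}{\alpha|E_H|} - 1$, the Bennett parameters become $B = B_j - C_{M_i} = C_{M_i} K$ and $v \le C_{M_i}^2 K$, so that $v/B^2 = 1/K$ and, crucially, $tB/v = \varepsilon$ for the choice $t = \varepsilon C_{M_i}$.

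Plugging into Bennett's inequality (valid with an upper bound on the variance, as used in~\cite{Sarpe2021PRESTO}) yields
\begin{equation*}
\mathbb{P}\bigl[|C_{M_i}' - C_{M_i}| \ge \varepsilon C_{M_i}\bigr] \le 2\exp\!\left(-\frac{s\bigl((1+\varepsilon)\ln(1+\varepsilon)-\varepsilon\bigr)}{m/(\alpha|E_H|)-1}\right).
\end{equation*}
Setting the right-hand side to $\eta/|\mathcal{M}(H,\ell)|$ and solving for $s$ gives exactly the bound stated in the theorem. A union bound over the at most $|\mathcal{M}(H,\ell)|$ motifs then yields the desired simultaneous guarantee of probability at most $\eta$.

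The main obstacle I anticipate is verifying the deterministic envelope $B_j$: the proof hinges on the fact that $1/p_{e_R}$ is controlled by $m/\alpha$ uniformly over $e_R \in E_T$, which is specific to the edge-weight distribution used in the analysis (distribution (4)); a different choice would change $K$. A second subtlety is matching the Bennett parameters so that $tB/v$ collapses to $\varepsilon$ (giving the clean $h(\varepsilon)$ factor in the exponent), which only works because the same factor $K$ appears in both the variance and the envelope bounds, propagating the multiplicative structure of the error $\varepsilon C_{M_i}$ cleanly through the calculation.
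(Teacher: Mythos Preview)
Your proposal is correct and follows essentially the same approach as the paper: bound a single motif's deviation via Bennett's inequality using the unbiasedness (Lemma~\ref{lemma:unbiased}), the variance bound (Lemma~\ref{lemma:variancebound}), and the deterministic envelope $X_{M_i}^j \le \frac{mC_{M_i}}{\alpha|E_H|}$, then union bound over $|\mathcal{M}(H,\ell)|$ motifs. The only cosmetic difference is that the paper derives the envelope by writing $X_{M_i}^j=\frac{1}{|E_H|}\sum_{e}C_{M_i}(e)X_e/p_e$ and bounding each $X_e/p_e\le m/\alpha$ before using $\sum_e C_{M_i}(e)=|E_H|C_{M_i}$, whereas you use the single-sampled-edge observation $C_{M_i}(e_R)\le C_{M_i}$; both yield the same $B_j$.
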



\section{Experimental Evaluation}\label{sec:experimentalEval}

We implemented \algname\ and tested it on several large datasets (see Section~\ref{subsec:expsetup} for details on setup, and data). Our experimental evaluation has the following goals: compare \algname\ with state-of-the-art algorithms for approximating motif counts (Section~\ref{subsec:ApproxRes}); 
evaluate the scalability of a simple parallel implementation of \algname\ (Section~\ref{subsec:ParallelImpl}); provide a case study highlighting the usefulness of using \algname\ (Section~\ref{subsec:caseStudy}) to analyze real-world temporal networks.

\subsection{Setup, and Datasets}\label{subsec:expsetup}
We briefly describe the setup and the large-scale datasets used in our experimental evaluation.

We implemented our algorithm \algname\ in C++20 and compiled it under \texttt{gcc 9.3} with optimization flag enabled (implementation 
available at \url{https://github.com/VandinLab/odeN}), additional details on the implementation are in Appendix \ref{app:implementation}. We compared \algname\ with four different baselines, denoted as \texttt{PRESTO-A} (\texttt{PR-A}), \texttt{PRESTO-E} (\texttt{PR-E})~\cite{Sarpe2021PRESTO}, \texttt{LS}~\cite{liu2019sampling}, and \texttt{ES}~\cite{Wang2020Efficient}. We used the original implementations available from the authors.  We performed all experiments under Ubuntu 20.04 on a machine with 64 cores, Intel Xeon E5-2698 2.3GHz, running each algorithm single threaded and with 300GB of maximum RAM allowed. 

The datasets used in our experimental evaluation are reported in Table \ref{tab:datasets}, which shows the number of nodes and edges of $T$, the precision of the timestamps, the timespan of the network, the number  $|E_T|$ of undirected  edges in the corresponding undirected projected static network $G_T$, the maximum degree $d_{\max}$ of a node in $G_T$ and the maximum number  $w_{\max}$ of temporal edges that are mapped on the same static edge in $G_T$. The datasets are from different domains: SO is a network that models interactions from the Stack-Overflow platform \cite{paranjape2017motifs}, BI is a network of Bitcoin transactions~\cite{liu2019sampling}, RE a network built from comments on the platform Reddit \cite{liu2019sampling}, and EC is a \emph{bipartite} temporal network build from IPv4 packets exchanged between Chicago and Seattle \cite{Sarpe2021PRESTO}. See the original papers for more details on the networks and the processes they model. 

When measuring the running times for the various algorithms we exclude the time to read the dataset. Since \texttt{ES}'s implementation supports only values of $\ell$ up to 4, we do not report results for \texttt{ES} and $\ell > 4$.
Unless otherwise stated we used $\delta=86400$ for SO and RE, $\delta=43200$ on BI, and $\delta=50000$ on EC, as done in previous works \cite{paranjape2017motifs,liu2019sampling, Wang2020Efficient}. 
Since all algorithms used in our comparison have different parameters and only \algname\ counts multiple motifs simultaneously, we used the following procedure to choose the parameters. For a given target template $H$ and $\ell$, we run \texttt{PRESTO-A}, \texttt{PRESTO-E}, \texttt{LS}, and \texttt{ES} for each motif in $\mathcal{M}(H,\ell)$ with fixed parameters, and computed their running time as the sum of the running times required by the single motifs in $\mathcal{M}(H,\ell)$. We then fixed the parameters of \algname\ so that its running time would be at most the same as  the other methods, or be close to it. All the parameters used in the experiments (including sample sizes) are reported with the source code.
To extract the \emph{exact} counts of motifs we used a modified version of the algorithm by Mackey et al.~\cite{Mackey2018Temporal}. We do not report the running times of such algorithm since, even though it employs parallelism, it still runs several orders of magnitude slower than approximate approaches.

\begin{table}[t]
    \centering
    \caption{Datasets used and their statistics. See Section \ref{subsec:expsetup} for details on the statistics reported.}
    \label{tab:datasets}
    \scalebox{0.7}{
        \begin{tabular}{cccccccl}
            \toprule
            Name& $n$&$m$& $|E_T|$ & $ d_{\max}$& $w_{\max}$ & Precision & Timespan\\
            \midrule
            SO & 2.58M & 47.9M & 28.1M  & 44K & 594 & sec & 2774 (days)\\
            BI & 48.1M & 113M & 84.3M  & 2.4M & 24.2K &sec & 2585 (days)\\
            RE & 8.40M & 636M & 435.3M  & 0.3M & 165K & sec &3687 (days)\\
            EC & 11.16M & 2.32B & 66.8M  & 0.3M & 3.8M & $\mu$-sec &62.0 (mins)\\
            \bottomrule
        \end{tabular}
    }
\end{table}

\subsection{Approximation Quality and Running Time}\label{subsec:ApproxRes}

In this section we compared the quality of the estimates and the running times of \algname\ and the baseline sampling approaches.

To evaluate the approximations qualities we used the MAPE (Mean Average Percentage Error) metric over ten executions of each algorithm and parameter configuration. The MAPE is computed as follows: let $C_{M_i}'$ be the estimate of $C_{M_i}, i=1,\dots,{|\mathcal{M}(H,\ell)|}$, returned by an algorithm, then the relative error of such estimate is $|C_{M_i}'-C_{M_i}|/ C_{M_i}$. The MAPE is the average over the ten runs of the relative errors, in percentage. On each of the ten runs we also measured the running time of each algorithm, for which we will report the arithmetic mean. 

We first discuss the quality of the estimates for different datasets when $H$ is a triangle and $\ell \in \{4,5\}$. For $\ell=4$ there are ${|\mathcal{M}(H,\ell)|} = 96$ triangles, while for $\ell=5, {|\mathcal{M}(H,\ell)|}$ is 800. So as long as $\ell$ increases the approximation task becomes more challenging, due to the exponential growth of the number of motifs. We also observe that, to the best of our knowledge, such a huge number of temporal motifs was never tested before on large datasets due to the limitations of existing algorithms, while, as we will show, \algname\ renders the approximation task practical even on hundreds of motifs.

\begin{figure}[t]
    \centering
    \subfloat[]{
        \begin{tabular}{cc}
            \includegraphics[width=.45\linewidth]{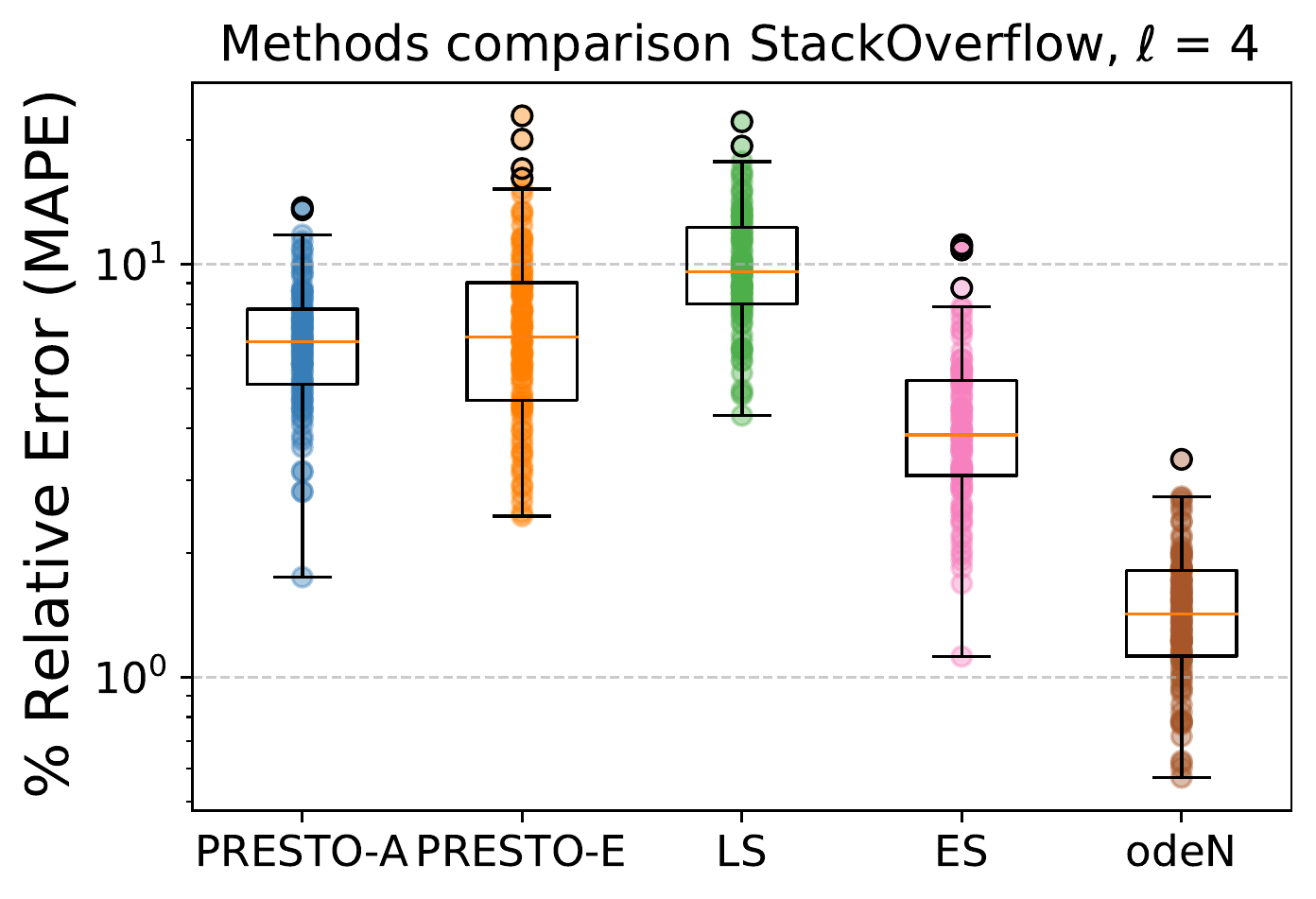} &  \includegraphics[width=.45\linewidth]{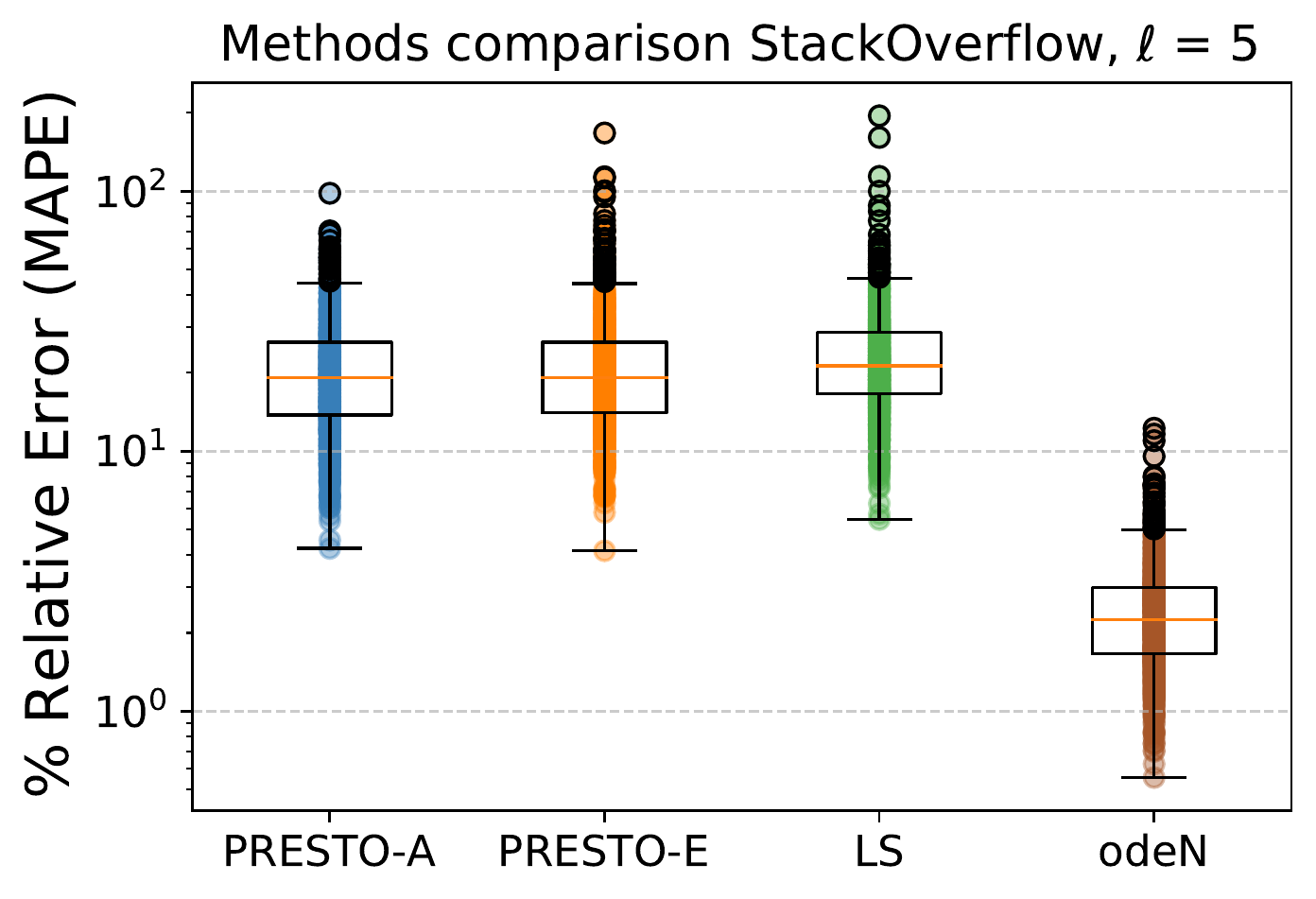} 
        \end{tabular}
        \label{fig:SO_Triangles}
    }\\
    \subfloat[]{
        \begin{tabular}{cc}
            \includegraphics[width=.45\linewidth]{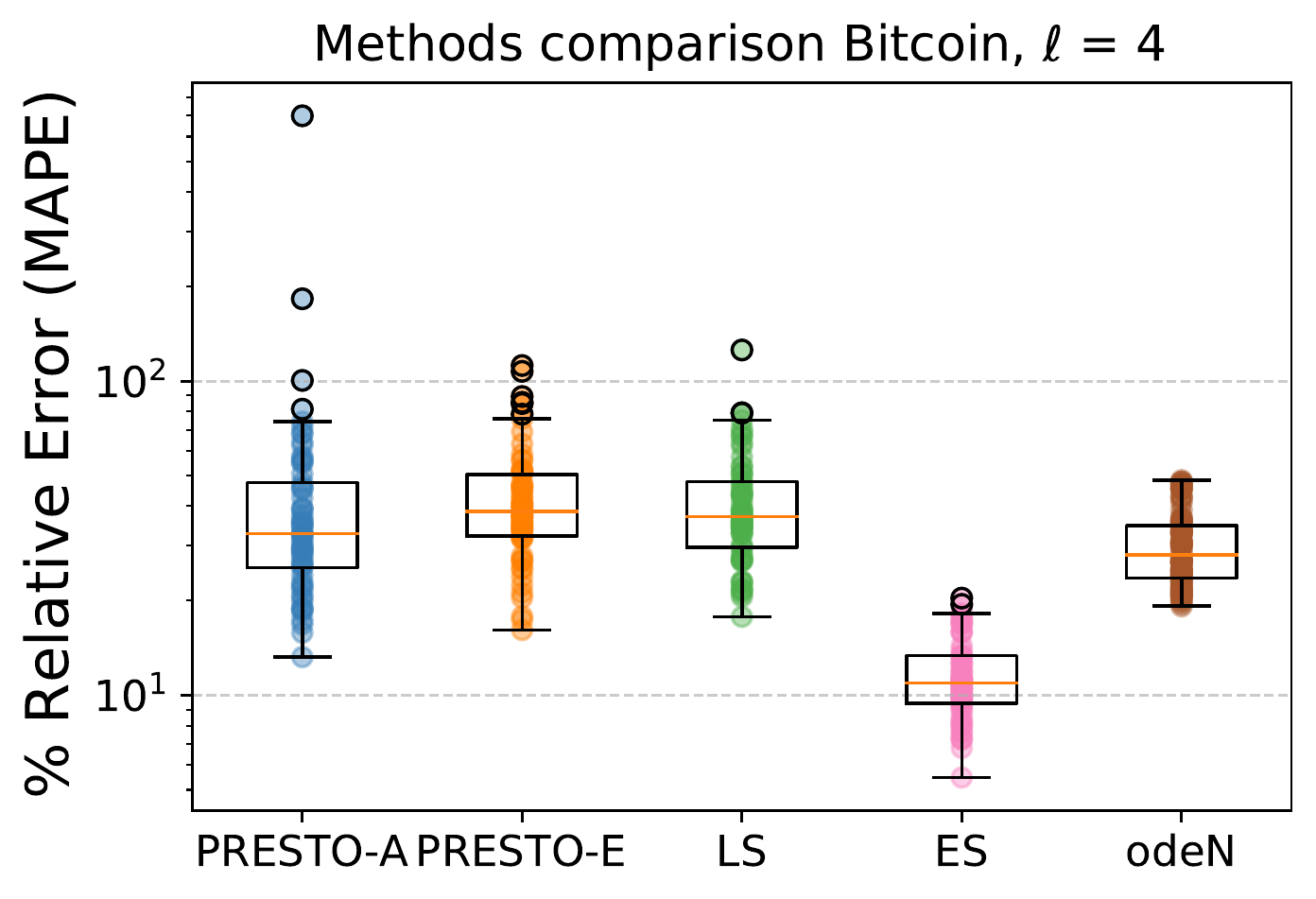} &
            \includegraphics[width=.45\linewidth]{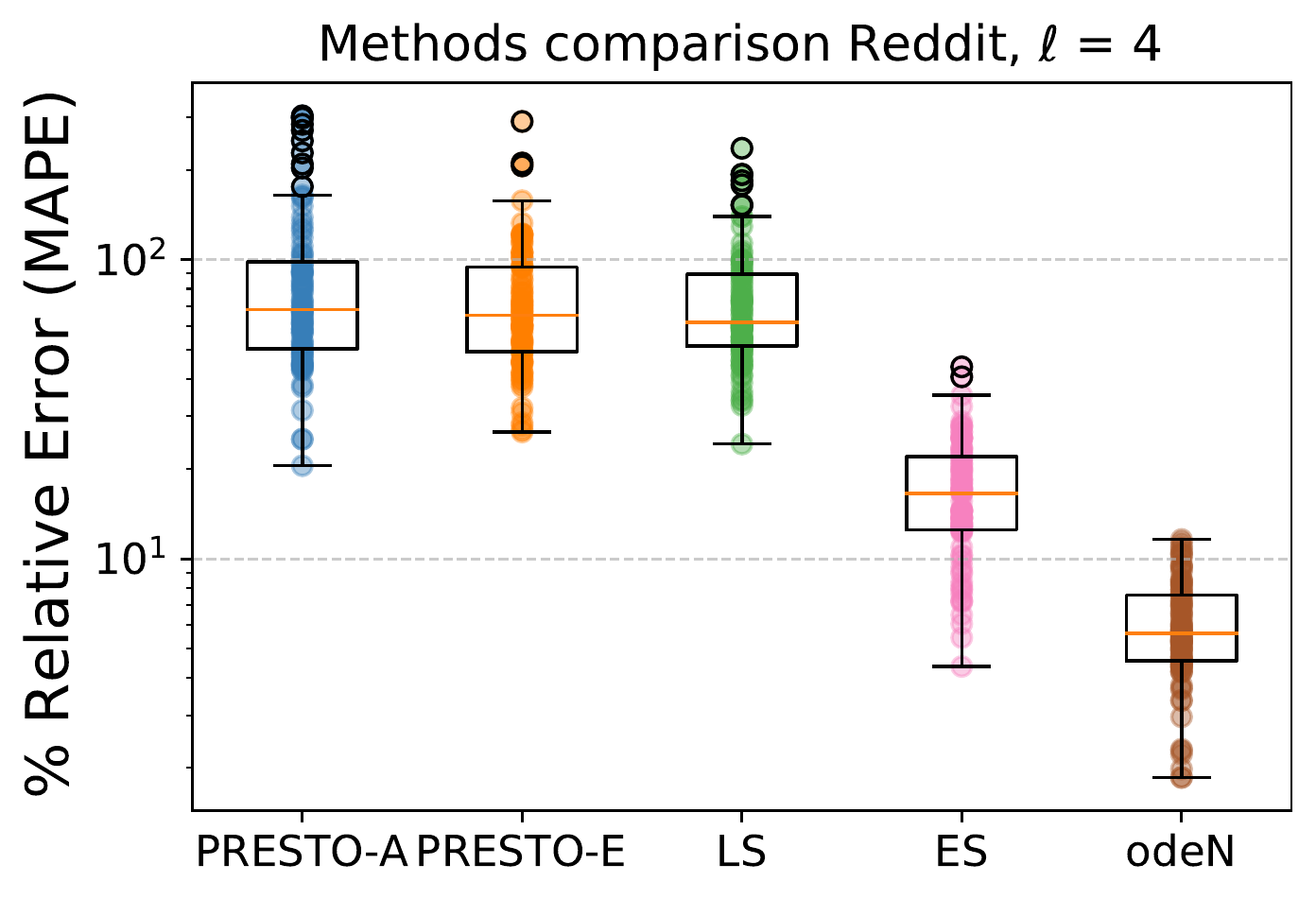} 
        \end{tabular}
        \label{fig:BI_RE_Triangles}
    }\\
    \subfloat[]{
        \begin{tabular}{cc}
            \includegraphics[width=.45\linewidth]{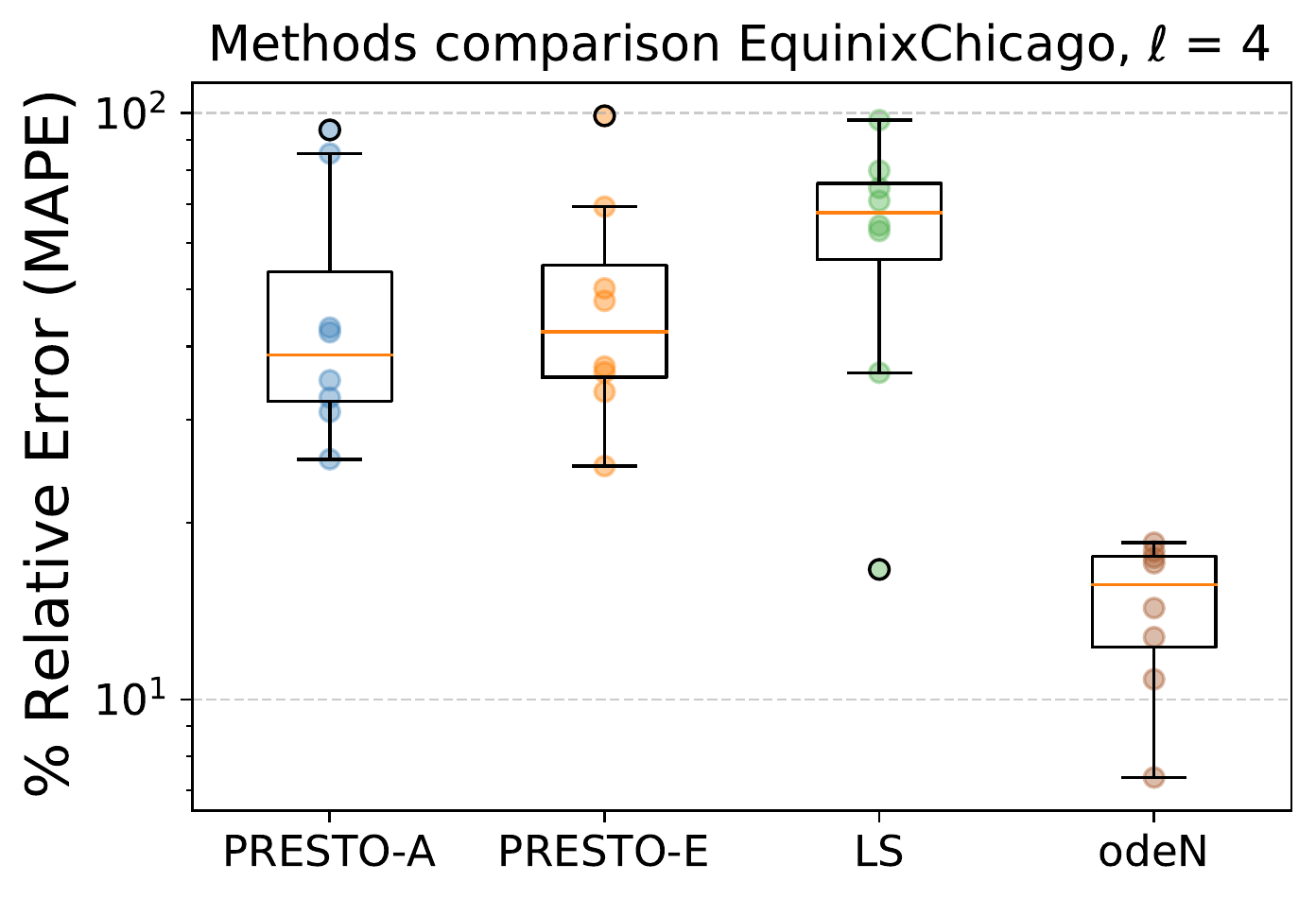} &
            \includegraphics[width=.45\linewidth]{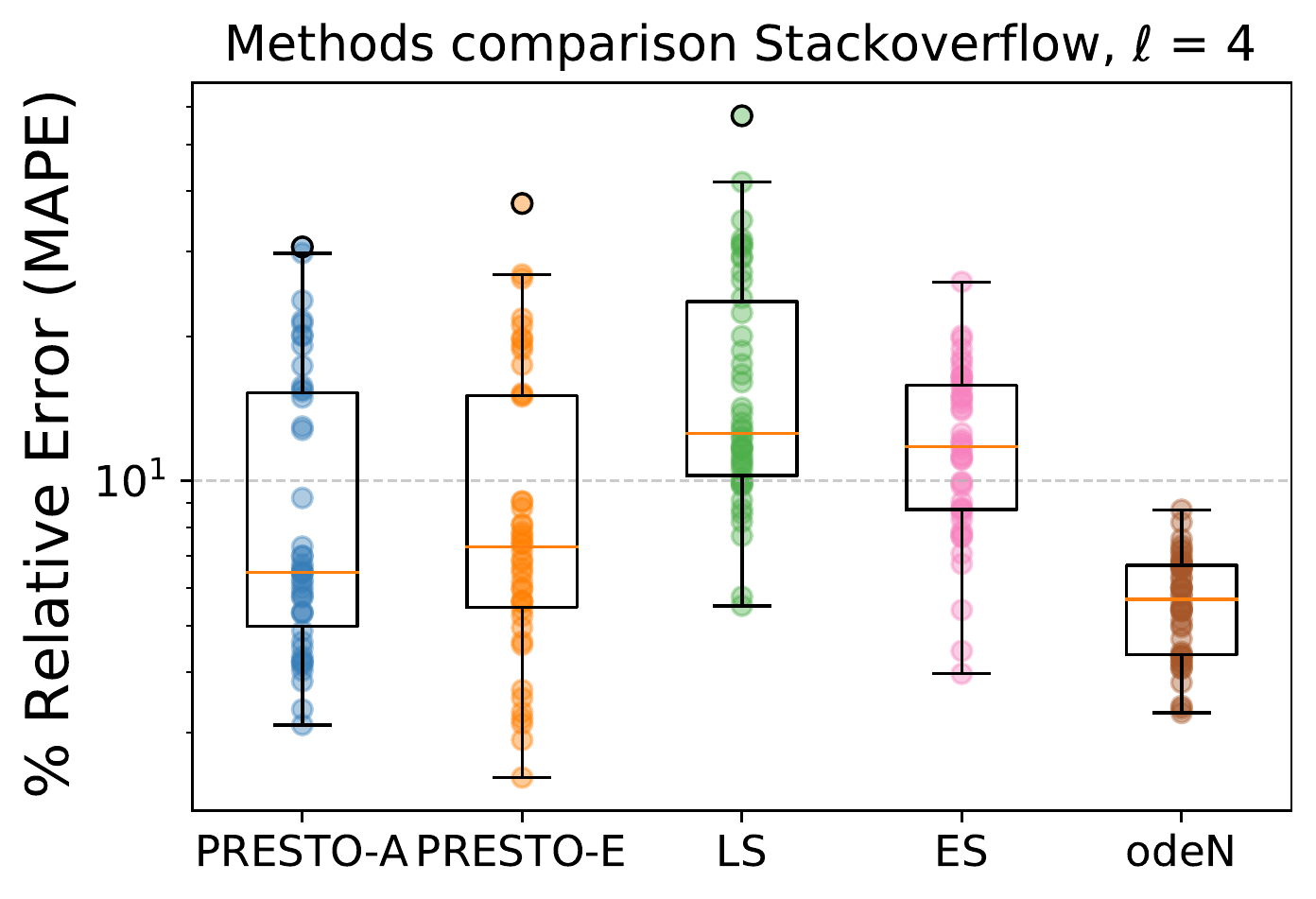} 
        \end{tabular}
        \label{fig:EC_SO_Estimates}
    }
    \caption{Approximation error on different datasets. (\ref{fig:SO_Triangles}): SO dataset, $H$ is a triangle, for $\ell=4$ (left) and $\ell=5$ (right). (\ref{fig:BI_RE_Triangles}): $H$ is a triangle, $\ell=4$, BI dataset  (left) and RE dataset (right). (\ref{fig:EC_SO_Estimates}): EC dataset, $H$ is an edge, $\ell=4$ (left); SO dataset, $H$ is a square, $\ell=4$.}
    \label{fig:approxBoxPlots}
\end{figure}

The results on the SO dataset are shown in Figure \ref{fig:SO_Triangles}. \algname\ provides much sharper estimates than state-of-the-art sampling techniques for single motif estimations on motifs $M_1,\dots,M_{|\mathcal{M}(H,\ell)|}$: the relative error on $\ell=4$-edge triangles is bounded by 5\%, and for $\ell=5$-edge triangles (where ${|\mathcal{M}(H,\ell)|}=800$) the relative error is bounded by 12\% while state-of-the-art algorithms report much less accurate estimates, with twice the relative error of \algname, on each configuration. We report the running times to obtain such estimates in Table~\ref{tab:triangleEstimatesTimes}.  Interestingly, \algname\ is more than 3$\times$ faster with $\ell=4$ than any sampling algorithm and 1.7$\times$ faster with $\ell=5$. 
For the other datasets, since extracting all the exact counts for $\ell >4$ is extremely time consuming, requiring up to months of computation, 
we will not discuss the approximation qualities for $\ell=5$ (since we do not have the exact counts to evaluate them).

On dataset BI (Figure \ref{fig:BI_RE_Triangles} left) \algname\ provides more concentrated estimates for the ${|\mathcal{M}(H,\ell)|} = 96$ triangles than other algorithms but \texttt{ES}, which also has a smaller running time than \algname. This may be related to the static graph structure of BI, which has some very high-degree nodes (see Table \ref{tab:datasets}). Therefore \algname\ may sample edges with very high degree nodes, introducing an over counting in its estimates.
Nonetheless, for higher values of $\ell$ this issue is amortized over the growing number of motifs ${|\mathcal{M}(H,\ell)|}$.

On dataset RE (Figure \ref{fig:BI_RE_Triangles} right) the estimates by \algname\ are all within 13\% of relative error and improve significantly over state-of-the-art sampling algorithms, up to one order of magnitude of precision. Such estimates were notably obtained with significantly smaller running time than state-of-the-art sampling algorithms, improving up to 2$\times$ the running time of \texttt{ES} and 1.4$\times$ over \texttt{PRESTO} (as reported in Table \ref{tab:triangleEstimatesTimes}).

Finally, on the EC datasets, which is a bipartite temporal network with more than 2 billion edges we evaluated the approximation qualities with $H$ being an edge and $\ell=4$ (for which ${|\mathcal{M}(H,\ell)|}=8$), such motifs have fundamental importance in the analysis of temporal networks since they can be seen as building blocks \cite{Holme2019Temporal,Zhao2010Communication}. We report the results on such motifs in Figure \ref{fig:EC_SO_Estimates} (left) (\texttt{ES} is not shown since it did not terminate with the allowed memory budget). 
The estimates of \algname\ are well concentrated and within 20\% of relative error, while other sampling approaches provide approximations with a relative error up to 90\% or more. Moreover, \algname's results were obtained with a speedup of at least 2$\times$ over all the other sampling algorithms, rendering the approximations task feasible in a small amount of time on very large temporal networks.

To illustrate the enormous advantage of \algname\ over existing state of the art exact and approximation algorithms, we compared the various algorithms on dataset SO when $H$ is set to be a square and $\ell=4$, for which ${|\mathcal{M}(H,\ell)|}=48$. As \cite{Wang2020Efficient} observed, among the 4-edge square motifs there are 16 motifs that do not grow as a single component (i.e., their orderings start with $\langle(1,2)(3,4)\cdots\rangle$). Estimating the counts of such motifs is particularly hard for most of the current state-of-the-art sampling algorithms since they generate a large number of partial matchings, while such aspect does not impact \algname. The results are shown in Figure \ref{fig:EC_SO_Estimates} (right). \algname\ provides tight approximations under 9\% of relative error for \emph{all} four-edge square motifs, while other sampling algorithms fail to provide sharp estimates for some of the motifs. Surprisingly, as shown in Table \ref{tab:triangleEstimatesTimes}, to obtain such estimates \algname\ required less than 1.3 hours of computation while the exact computation of the counts required more than two weeks, and \algname\  it is at least 3$\times$ times faster than all algorithms, and it is 5.4$\times$ times faster than \texttt{ES}.

\begin{table}[t]
    \caption{Running times (in seconds) to obtain the results in Figure \ref{fig:approxBoxPlots} (results are showed following the order in Figure \ref{fig:approxBoxPlots}). Under $H$ we report the topolology of $H$ used: \texttt{T} for triangles, \texttt{E} for edges, and \texttt{S} for squares. \enquote{-} denotes not applicable, while \enquote{\ding{55}} denotes out of RAM.}
    \scalebox{0.9}{
        \begin{tabular}{cccccccc}
            \toprule
            Dataset & $\ell$ & $H$& \texttt{PR-A} & \texttt{PR-E} & \texttt{LS}  & \texttt{ES} & \algname \\
            \midrule
            SO & 4& \texttt{T} & 533.4 & 537.7 & 555.5 & 567.2 & \textbf{174.4}\\
            SO & 5& \texttt{T} & 4405 & 4408 & 4390 & - & \textbf{2515}\\
            BI & 4 & \texttt{T}& 2048.6 & 2065.2 & 2754.6 &\textbf{1602.9} &1948.9\\
            RE& 4 & \texttt{T} &  9787.1 & 10165.8 & 14289.7 & 13172.3 & \textbf{6814.9}\\
            EC& 4 & \texttt{E} & 2581.5 & 3014.9 & 2981.9 & \ding{55} & \textbf{1234.3}\\
            SO & 4 & \texttt{S} & 15613.7 & 16718.7 & 14344.6 & 26118.3 & \textbf{4517.9}\\
            \bottomrule
        \end{tabular}
    }
    \label{tab:triangleEstimatesTimes}
\end{table}

Overall, these results show that our algorithm \algname\ achieves much more precise estimates within a significant smaller running time than state of the art sampling algorithms when estimating the counts $C_{M_1},\dots, C_{M_{|\mathcal{M}(H,\ell)|}}$ for different values of $\ell$ and different topologies of the  target template $H$ (see Problem~\ref{problem:TGMC} in Section~\ref{sec:prelims}).

\subsection{Parallel Implementation}\label{subsec:ParallelImpl}
\begin{figure}[t]
    \centering
    \begin{tabular}{cc}
        \includegraphics[width=.45\linewidth]{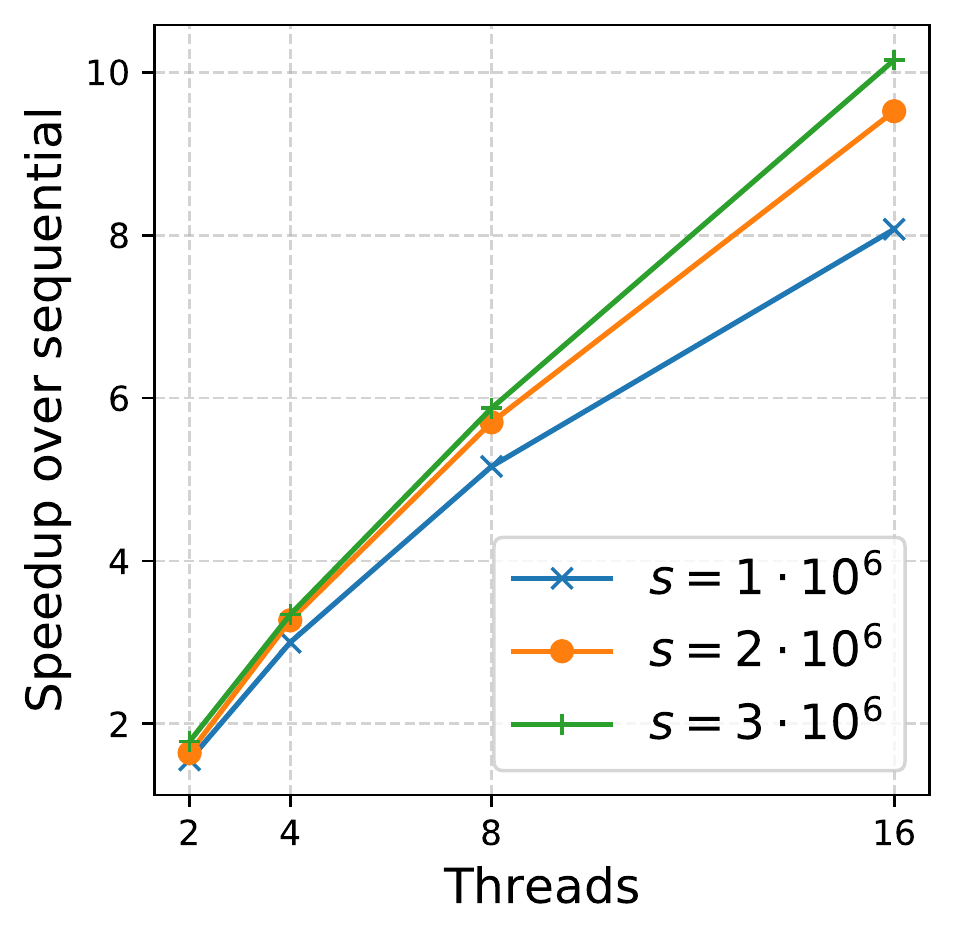} &
        \includegraphics[width=.45\linewidth]{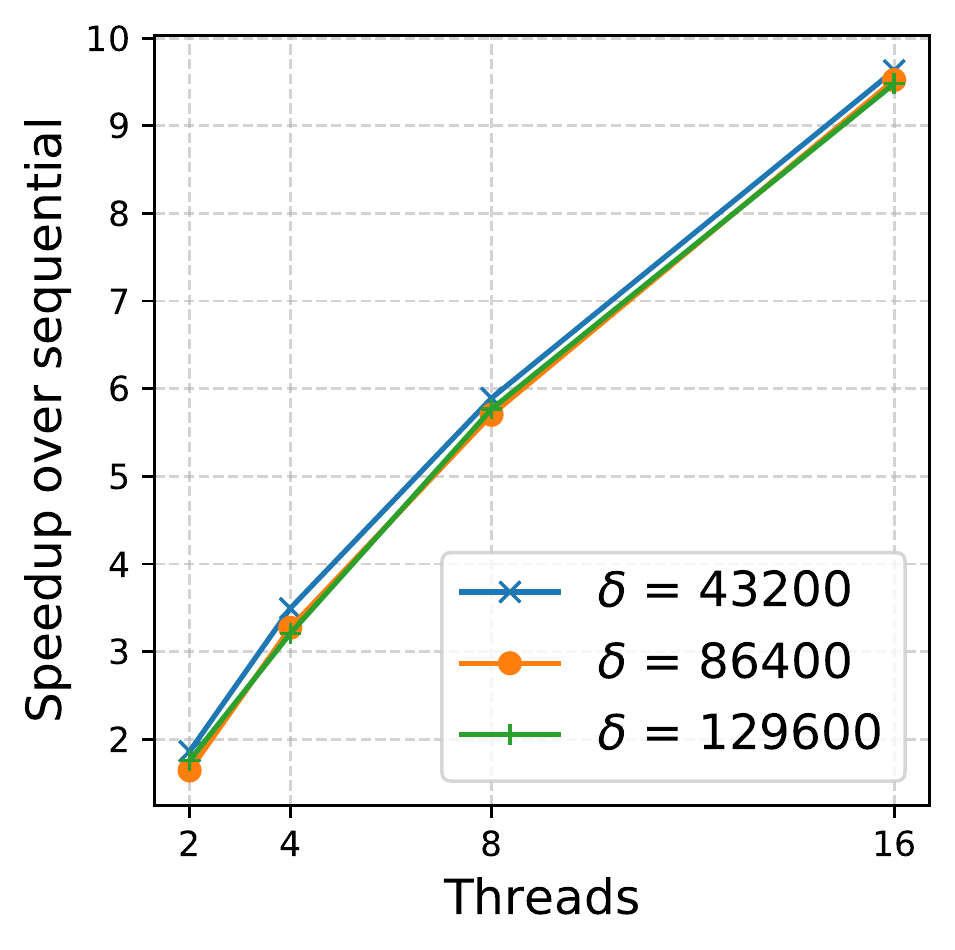}
    \end{tabular}
    \caption{Speed-up of \algname's parallel implementation. (Left): Varying $s$ and fixed $\delta$; (Right) Varying $\delta$ and fixed $s$.}
    \label{fig:parallelImpl}
\end{figure}
In this section we briefly describe the advantages of a simple parallel implementation of Algorithm \ref{alg:samplebystatic}. As discussed in Section \ref{subsec:generalAlg} the \texttt{for} cycle (from line \ref{alglinemain:forS}) can be trivially parallelized, therefore we implemented such strategy through a \emph{thread pooling} design pattern.

We describe the results obtained with $H$ set to be a triangle, $\ell=4$, and on the dataset SO; similar results are observed for other datasets. We tested the speedup achieved with $\omega \in\{2,4,8,16\}$ threads over the sequential implementation. Let $T_\omega$ the average running time with $\omega$ threads over ten execution of \algname\ with fixed parameters, with $T_1$ being the average time for running the algorithm sequentially.  
We report the value of  $T_1 / T_\omega, \omega \in \{2,4,8,16\}$, i.e., the speedup over the sequential implementation. Fig.\ \ref{fig:parallelImpl} (Left)
shows the speedup across different values of the sample size $s$, with $\delta=86400$. We observe an almost linear speedup up to 4 threads and then a slightly worse performance, especially for small sample sizes, that may be related to the time needed to process each sample. Fig.\ \ref{fig:parallelImpl} (Right) shows how the speedup changes for $s=2\cdot 10^6$ and different values of $\delta$. We note that our algorithm \algname\ seems not to be impacted by the value of $\delta$, and always attaining similar performances. 
Interestingly, as captured by our analysis in Section \ref{subsubsec:computComplex}, the algorithm does not reach a fully linear speedup since we did not parallelized the computation of the sampling probabilities $p(e), e\in E_T$. As a remark, our parallel implementation is not optimized, and more advanced parallel strategies may substantially increase its speedup.

\subsection{A Case Study}\label{subsec:caseStudy}
In this section we illustrate how counting multiple motifs, corresponding to the same target template $H$, with \algname\ can be used to extract useful insights from a temporal network. We consider a real-world activity network from Facebook~\cite{Viswanath2009}. In such network, each node represents a user and a temporal edge $(u,v,t)$ indicates that user $u$ posted on $v$'s wall at time $t$ (see the original publication~\cite{Viswanath2009} for more details). The network contains information collected from September 2006 to January 2009. After removing self-loops, the network has $n$=45.7K nodes, $m$=826K temporal edges, and $|E_T|$=179K static (undirected) edges. We will fist show how analyzing the motif counts obtained with \algname\ provides complementary insights to those in~\cite{Viswanath2009}, that relied on mostly static analyses. We then conclude by discussing how the counts of the network evolve by varying only the parameter $\ell$ (i.e., fixing $H, \delta$), showing that such counts surprisingly differ with different values of such parameter.

In the original paper~\cite{Viswanath2009}, the authors partitioned the Facebook network in nine different snapshots (obtaining nine projected static networks), with each snapshot spanning 90 days of interactions in the network. The authors observed that consecutive snapshots have small resemblance, i.e., on average only 45\% of the edges are preserved through consecutive snapshots. The authors also observed that despite this difference all the snapshots have similar, almost invariant, structural properties in terms of their clustering coefficient, average degree distribution, and others. We used \algname\ (with $\varepsilon=1, \eta=0.1$) to compare the temporal networks associated to the snapshots by computing the counts of the 8 temporal motifs in $\mathcal{M}(H,\ell=3)$ with $H$ being a triangle and $\delta = 86400$ = 1 day. On each snapshot, after extracting the motif counts, we computed for each motif $M$ its normalized count on the snapshot as $C_M/\sum_{i=1}^8C_{M_i}$. The results are reported in Fig. (\ref{subfig:countSnapshots}) (see Appendix \ref{app:CaseStudy} for a visual representation of the motifs). Interestingly, even if in~\cite{Viswanath2009} the authors highlight small resemblance through different snapshots, the counts of the motifs are stable across the different snapshots, especially by looking at the first three and the last two snapshots. Surprisingly on snapshots 6 and 7, which correspond to the period of observation of mid-2008, we observe that there is a significant variation in the motif counts w.r.t. the previous months. This is the period where the authors of~\cite{Viswanath2009} observed a change in Facebook's interface (that led to a drop in the growth of the network) that seems to be correlated to the variation on the motif counts. Even more surprisingly, this aspect is not captured by a static analysis of the snapshots as performed in~\cite{Viswanath2009}. Thus, our  temporal motifs analysis through \algname\ is able to capture a variation in the growth of the network that the static analysis cannot highlight. (We discuss how the motifs and their counts can be used to characterize the activity on the network in Appendix \ref{app:CaseStudy}).

We then analyzed how the different motif counts of the  whole network change by varying the parameter $\ell$.
We fixed $H$ a triangle and run \algname\ with $\varepsilon=1, \eta=0.1, \delta=86400$. The results are shown in Figure (\ref{subfig:countsFB}). We observe that the counts of $M_1, \dots, M_{|\mathcal{M}(H,\ell)|}$ vary significantly by increasing $\ell$. For $\ell=3$ almost all the motifs have the same counts, while for larger $\ell$ there are some motifs with very high counts (i.e., overrepresented) and some other motifs that are underrepresented. Overall the highest counts range from $10^4$ to $10^6$ from $\ell=3$ up to $\ell=6$. To understand if these counts increase only by chance, we performed a widely used statistical test (e..g,~\cite{Gauvin2018Randomized, Kovanen2013Gender}) by computing the $Z$-scores of the different motif counts under the following null model \cite{Milo2004Superfamilies}. We generated 500 random networks by the timeline shuffling random model
\cite{Gauvin2018Randomized}, which redistributes all the timestamps by fixing the \emph{directed} projected static network. For each motif $M_i, i=1,\dots,{|\mathcal{M}(H,\ell)|}$ we computed a $Z$-score that is defined as follows: let $C_{M_i}$ be the count of the motif in the original network and let $C_{M_i}^1, \dots, C_{M_i}^{500}$ be its counts on the $j$-th random network $j\in\{1,\dots,500\}$. The $Z$-score is computed as, $Z_{M_i} = (C_{M_i} - \sum_{j=1}^{500} C_{M_i}^j / 500 ) / \text{std}(C_{M_i}^1, \dots, C_{M_i}^{500})$ where std$(\cdot)$ denotes the standard deviation. The results are in Fig.~(\ref{subfig:ZscoresFB}), and they show that the counts in Fig.~(\ref{subfig:countsFB}) are very significant and not due to random fluctuations (higher $Z$-scores indicate that such motif counts are significantly more frequent in $T$ than in the networks permutated randomly). Interestingly, the $Z$-scores in Figure (\ref{subfig:ZscoresFB}) follow a similar law to the counts in Figure~(\ref{subfig:countsFB}), with the highest $Z$-scores increasing significantly every time $\ell$ increases. Notably the highest $Z$-scores of motifs with $\ell=6$ are  more than 3 orders of magnitude larger than the $Z$-scores of motifs with $\ell=3$. (We discuss some of the significant motifs in Appendix \ref{app:CaseStudy}).

\begin{figure}[t]
    \centering
    \subfloat[]{
        \includegraphics[width=0.8\linewidth]{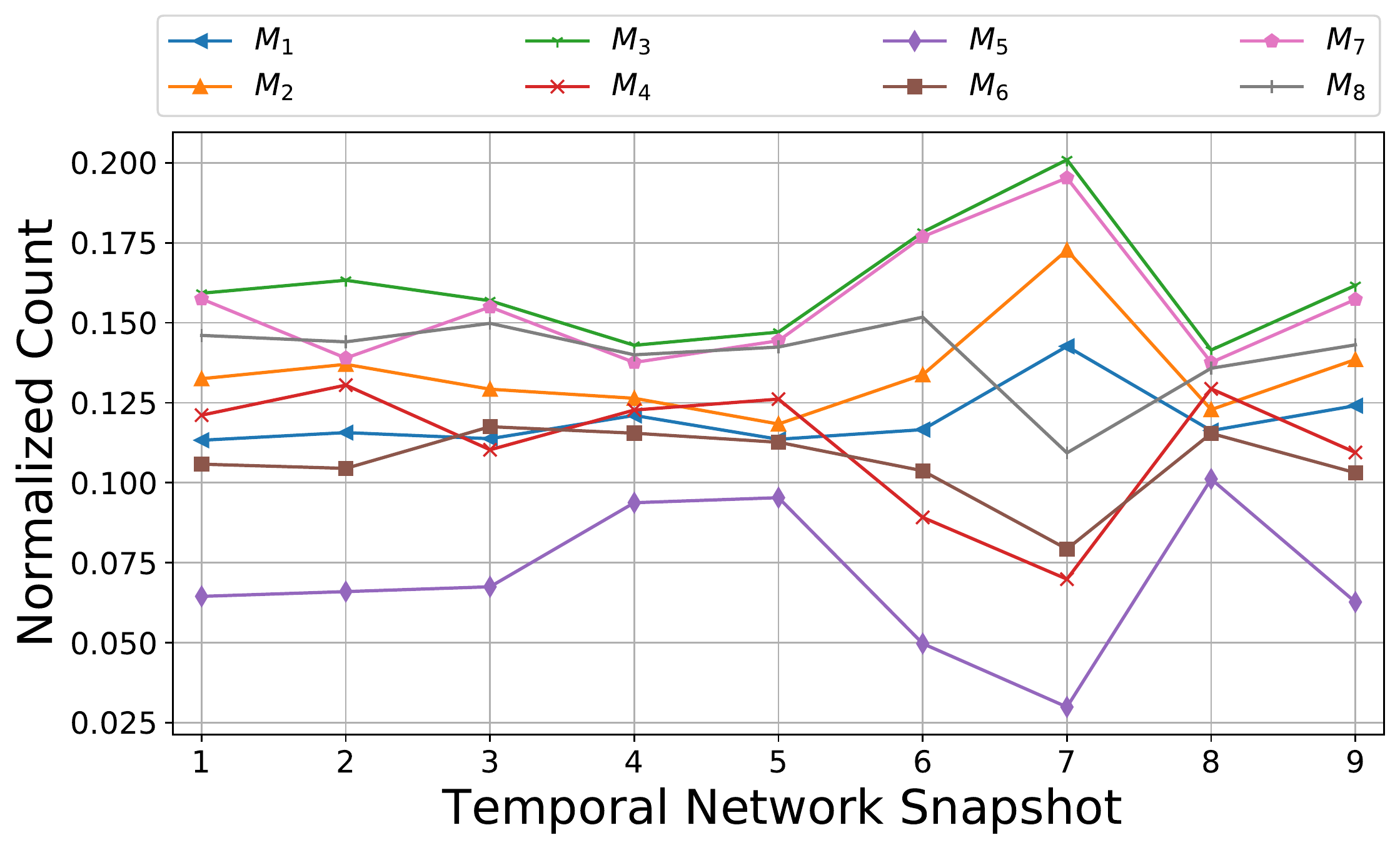}
        \label{subfig:countSnapshots}
    }
    \\
    \subfloat[]{
        \includegraphics[width=.45\linewidth]{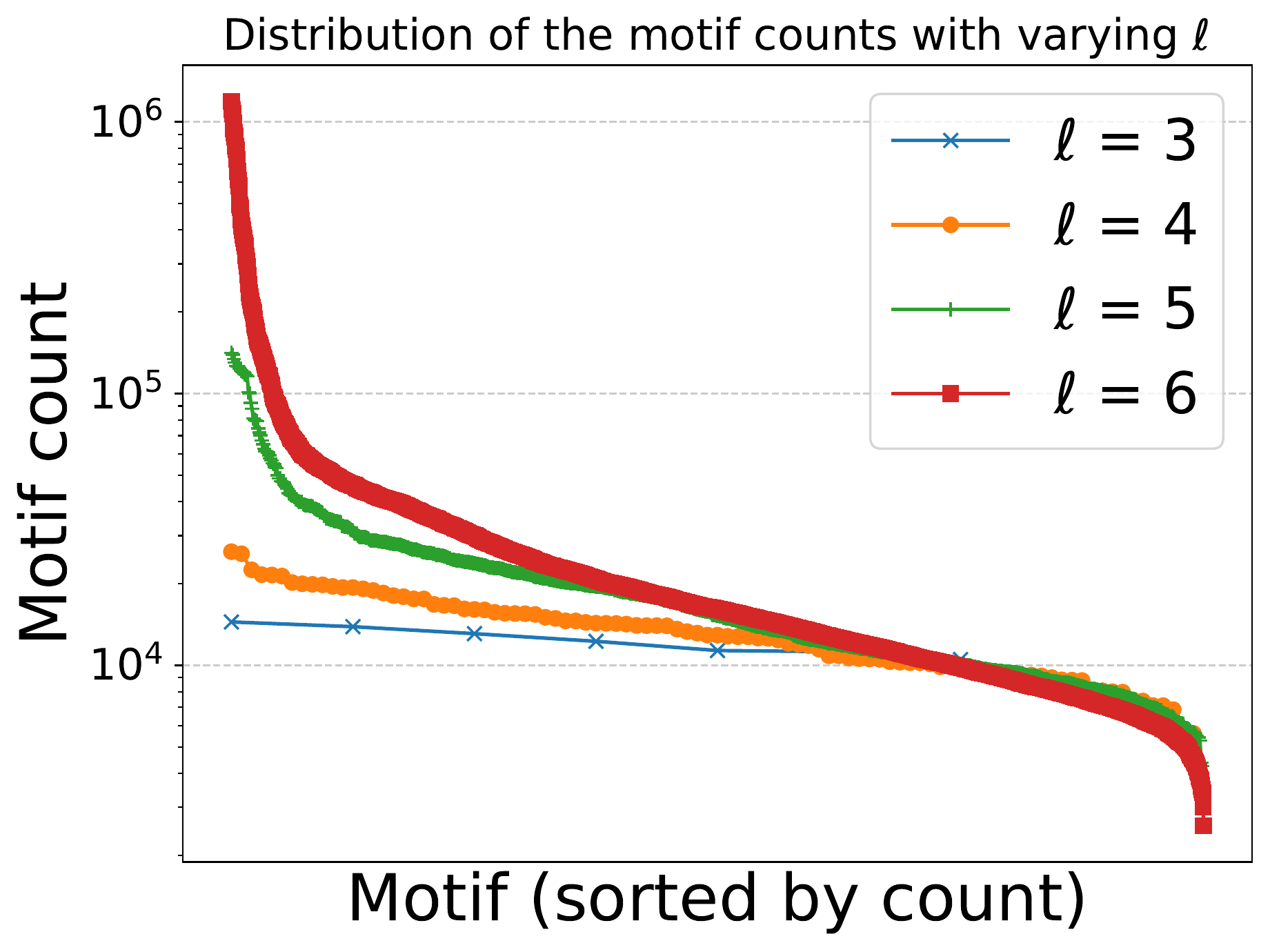}
        \label{subfig:countsFB}
    }
    \subfloat[]{
        \includegraphics[width=.47\linewidth]{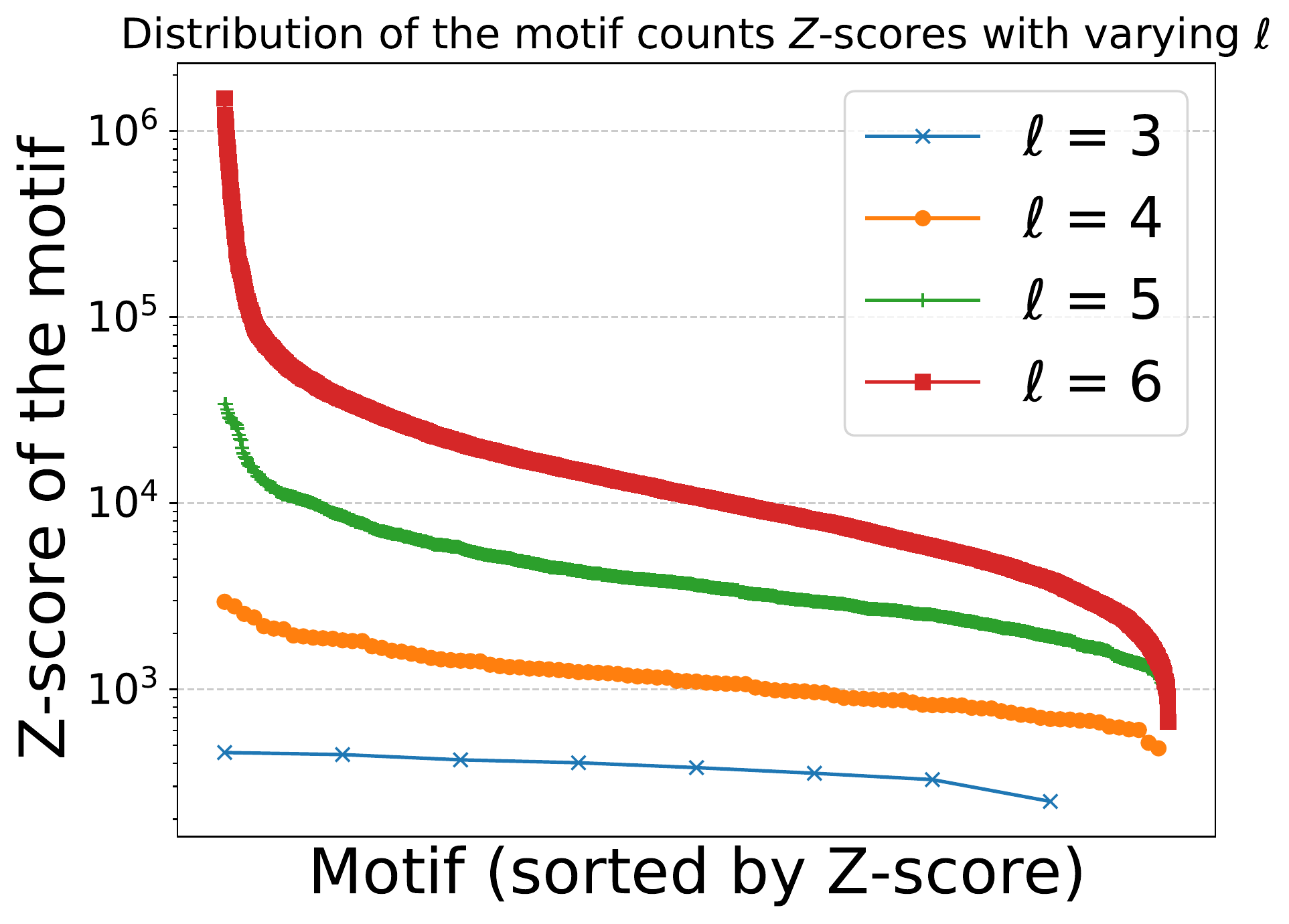}
        \label{subfig:ZscoresFB}
    }\\
    \caption{(\ref{subfig:countSnapshots}): Counts of the motifs in $\mathcal{M}(H,3)$ with $H$ a triangle on each temporal network corresponding to one snapshot in \cite{Viswanath2009}. (\ref{subfig:countsFB}): Counts on the full Facebook network with varying $\ell$. (\ref{subfig:ZscoresFB}): $Z$-scores of the motif counts with varying $\ell$.}
\end{figure}

\section{Conclusions}
In this work we introduced \algname, our algorithm to obtain rigorous, high-quality, probabilistic approximations of the counts of multiple motifs with the same static topology in large temporal networks. Our experimental evaluation shows that \algname\ allows to analyze several motifs in large networks in a fraction of the time required by state-of-the-art approaches. We believe that our algorithm \algname\ will be of practical interest in the analysis of temporal networks, complementing many of the existing tools and helping in understanding complex networked systems and their patterns.


There are several interesting directions for future research, including devising better edge probability distributions for \algname\ and choosing such distribution based on the characteristics of the dataset, since different datasets can have very different temporal edges distributions (e.g., with skewed behaviours \cite{Sarpe2021PRESTO}) and, thus, there may not exist a unique distribution that  is effective for all temporal networks.
Another direction of future research is the derivation of improved bounds for the number of samples required by \algname, using for example statistical learning theory concepts, such as pseudodimensions or Rademacher averages.

\begin{acks}
    This work was supported, in part, by MIUR of Italy, under PRIN Project n. 20174LF3T8 AHeAD, and grant L. 232 (Dipartimenti di Eccellenza), and by the U. of Padova project \enquote{SID 2020: RATED-X}.
\end{acks}

\newpage
\bibliographystyle{ACM-Reference-Format}
\bibliography{biblio}

\newpage
\appendix 
\section{Notation}
The notation used throughout this work is summarized in Table~\ref{tab:notationTable}.
\begin{table}[t]
    \centering
    \caption{Notation table.}
    \label{tab:notationTable}
    \begin{tabularx}{\columnwidth}{cl}
        \toprule
        Symbol & Description\\
        \midrule
        $T=(V,E)$ & Temporal network\\
        $n,m$ & Number of nodes and temporal edges of $T$\\
        $G_T$ & Undirected projected static network of $T$\\
        $M_i, i\in [1,{|\mathcal{M}(H,\ell)|}]$ & Motifs in $\mathcal{M}(H,\ell)$\\
        $k$ & Nodes in the motifs\\
        $\ell$ & Edges of the motifs\\
        $\delta$ & Duration limit of $\delta$-instances\\
        $M=(\mathcal{K}, \sigma)$ & Motif as pair (multigraph, ordering)\\
        $\mathcal{U}(M, \delta)$ & Set of $\delta$-instances of $M$ from $T$\\
        $C_M$ & Number of $\delta$-instances of $M$ in $T$\\
        $G_u[M]$ & Undirected graph associated to $\mathcal{K}$\\ 
        $\mathcal{M}(H,\ell)$ & \shortstack[l]{Set of \emph{distinct} motifs with $\ell$ edges s.t.\\ it holds $G_u[M_i]\simeq H\, \forall M_i\in \mathcal{M}(H,\ell)$.}\\
        $H$ & Static undirected target template\\
        $V_H, E_H$ & Set of nodes and edges of the target $H$\\
        $C_{M_i}(e)$ & Number of $\delta$-instances containing $e\in G_T$\\
        $s$ & Number of samples collected by \algname\\
        $X_e$ & Indicator variable denoting if $e\in G_T$ is sampled\\
        $p_e, p(e)$ & Probability of sampling edge $e\in G_T$\\
        $X_{M_i}^j$ & Estimate of motif $M_i$ obtained at \algname's $j$-th step\\
        $C_{M_i}'$ & Final \algname's estimate of $C_{M_i}$\\
        $\varepsilon, \eta$& Quality and confidence parameters\\ 
        $\omega$ & Number of threads in \algname\ parallel\\ 
        \bottomrule
    \end{tabularx}
\end{table}


\section{\algname's Subroutines} \label{app:subroutines}
\subsection{\texttt{FastUpdate} and its Subroutines}\label{subsec:FastUpdate} \sloppy{We now discuss the \texttt{FastUpdate} routine that is called in line \ref{alglinemain:fastupdatecall} of Algorithm \ref{alg:samplebystatic} to keep $C_{estimates}$ updated. The \texttt{FastUpdate} subroutine is shown in Algorithm~\ref{alg:fastcount}. $C_{estimates}$ maintains the weighted counts of the motif sequences identified, therefore to keep it updated we first count the $\delta$-instances of $M_i, i=1,\dots {|\mathcal{M}(H,\ell)|}$ within the sampled temporal network i.e.\ $S$, and then rescale each count opportunely. Such routine will feature two main aspects, i) an efficient adaptation of the algorithm by Paranjape et al.~\cite{paranjape2017motifs} and ii) an efficient encoding of the various sequences representing the motifs occurrences within integers that will allow for fast operations (comparisons to distinguish between different motifs and fast updates to the data structures).}

We now discuss how \texttt{FastUpdate} counts all the $\delta$-instances in $S$. First observe that we already know that $G_S \simeq H$, and that $S$ can be rewritten as $S=(((x_1,y_1), t_1), \dots, ((x_\ell,y_\ell), t_\ell) $. We first compute the set $E_{unique}= \{(x,y) : ((x,y), t) \in S\}$ and we assign to each edge in $E_{unique}$ a unique identifier (lines \ref{alglinefc:forEunique}-\ref{alglinefc:mapideedge}). Then we run an efficient implementation of the algorithm by Paranjape et al.\ \cite{paranjape2017motifs} that computes through dynamic programming the counts of all the subsequences of edges $(x,y)$ s.t.\ $(x,y,t)\in S$ having length $\ell$ and occurring within $\delta$-time (lines \ref{alglinefc:paranstartcycle}-\ref{alglinefc:increment}). 
In Algorithm \ref{alg:paranfast} we show our implementation of the subroutines needed to execute lines \ref{alglinefc:paranstartcycle}-\ref{alglinefc:increment} (see the original paper \cite{paranjape2017motifs} for full details and correctness). 
Intuitively, lines \ref{alglinefc:paranstartcycle}-\ref{alglinefc:increment} of Algorithm \ref{alg:fastcount} scan the input sequence $S$ linearly, maintaining in memory information about the edges within $\delta$ time from the processed one. 
Through such scan the algorithm updates $Map_{counts}$ to keep the counts of the sequences having at most $\ell$ edges over the set $E_{unique}$. 
Starting the  cycle in line \ref{alglinefc:cycleS}, $Map_{counts}$ contains the counts of all the $\ell$ subsequences of edges from $S$ over the set $E_{unique}$. 
We highlight that we assign to each static edge of $S$ an ID of $b$ bits. This allows us to encode each sequence up to $j=1,\dots,\ell$ edges, occurring within $\delta$ time, in an integer using $j\cdot b$ bits through bitwise operations (\enquote{$<<$} denotes right shift and \enquote{$|$} denotes bitwise or) 
to allow for fast updates to $Map_{counts}$. 

 \begin{algorithm}[t]
     \DontPrintSemicolon
     \SetKwComment{Comment}{$\triangleright$\ }{}
     \LinesNumbered 
     \KwIn{$\delta, S
         , C_{estimates}, p(e_R),H$}
     $E_{unique} \gets \{(x,y) : (x,y,t) \in S \} $\label{alglinefc:uniqueRdges}\;
     $Map_{id} \gets \{\}$, $E_{rev} \gets []$, $Map_{counts} \gets \{\}$, $start \gets 1$ \label{alglinefc:datastructMaps}\;
     \texttt{id} $\gets$ 0 \label{alglinefc:idEdge}\;
     \ForEach{$e \in E_{unique}$\label{alglinefc:forEunique}}
     {
         $E_{rev}[\texttt{id}] \gets e$ \label{alglinefc:reversearray}, 
         $Map_{id}\{e\} \gets$ \texttt{id++} \label{alglinefc:mapideedge}\;
     }
     \ForEach{$(x,y,t) \in S$ \label{alglinefc:paranstartcycle}} 
     {
         \While{$t - t_{start} > \delta $}
         {
             \texttt{Decrement}$( Map_{id}[(x_{start},y_{start})], Map_{counts})$\;
             $start \gets start +1$\;
         }
         \texttt{Increment}$(Map_{id}[(x,y)], Map_{counts})$ \label{alglinefc:increment}
     }
     \ForEach{$\text{key $\bar{k}$ of length }\ell \in Map_{counts}.keys$ \label{alglinefc:cycleS}}
     {
         $M' \gets $ \texttt{ReconstructMotif}($\bar{k}, E_{rev}$) \label{alglinefc:reconstructmotif}\;
         
         \If{$G_{u}[M'] \simeq H $ \label{alglinefc:conditionsM}}
         {
             $M_i \gets \texttt{EncodeAndClassifyMotif}(M')$ \label{alglinefc:classifymotif}\;
             $X_{M_i} \gets Map_{counts}\{\bar{k}\}/(|E_H|p(e_R))$ \label{alglinefc:getCount}\;
             $ X_{M_i}' \gets C_{estimates}\{M_i\}$\;
             $C_{estimates}\{M_i\} \gets X_{M_i}' + X_{M_i}$\label{alglinefc:updatesatructure}
         }  
     }
     
     \caption{\texttt{FastUpdate}}\label{alg:fastcount}
 \end{algorithm}

To obtain the estimates of motifs $M_1, \dots, M_{|\mathcal{M}(H,\ell)|}$, for each $\ell$ sequence of edges identified we reconstruct the corresponding graph and thus the motif $M'$ that the sequences is an instance of in line \ref{alglinefc:reconstructmotif} (the multigraph is given by the edges ID's while the ordering of the edges is given by the sequence itself). 
We then check if $G_{u}[M']$ is isomorphic to $H$ 
(constraint (1) from Problem \ref{problem:TGMC}). If so 
we encode the motif in a sequence of $2 b \ell$ bits that allows us to classify such motif (line \ref{alglinefc:classifymotif}) in order to distinguish between distinct motifs (recall we want $M_i \ncong_\tau M_j, i\neq j$). The encoding is computed as follows: given $M'= \langle (x_1,y_1),\dots, (x_\ell, y_\ell) \rangle$ we assign to each node an incremental ID according to its first appearance in $M'$ and we obtain the final encoding as $\langle\text{\texttt{ID}}(x_1)\text{\texttt{ID}}(y_1)\dots \text{\texttt{ID}}(x_\ell)\text{\texttt{ID}}(y_\ell)\rangle$. It is easily seen that two motifs $M_1,M_2$ share the same encoding iff it holds $M_1\cong_\tau M_2$ as desired, given that the motifs are \emph{directed} and the definition of distinct motifs accounts for the ordering in which edges appear. We provide an example below.

\begin{example}
    Let us consider $M_1, M_2,$ and $M_3$ from Figure \ref{fig:congruentMotifs}.  Consider $\sigma_1 = \langle(y,x), (y,z),(x,z)\rangle$, then by assigning an incremental ID to each node according to its first appearance in $\sigma_1$ we get $\text{\texttt{ID}}(y)=1, \text{\texttt{ID}}(x)=2,\text{\texttt{ID}}(z)=3$ so the final encoding of $M_1$ is $\langle121323\rangle$. Following a similar procedure the encoding of $M_2$ is $\langle121323\rangle$, while the encoding $M_3$ is $\langle121332\rangle$. The encodings of $M_1$ and $M_2$ coincide while differing from the one of $M_3$ as desired.
\end{example}
After this step we update the global data structure $C_{estimates}$ by summing to each motif's estimate, its count in $S$ divided by $|E_H| p(e_R)$ where $p(e_R)$ is the probability of edge $e_R$ of being sampled (lines \ref{alglinefc:getCount}-\ref{alglinefc:updatesatructure}), which we prove in Section \ref{sec:approx_analysis} to be the correct weighting schema to output an unbiased estimate.

\begin{algorithm}[t]
    \DontPrintSemicolon
    \SetKwFunction{algo}{Increment}\SetKwFunction{proc}{Decrement}
    \SetKwProg{myalg}{Function}{}{}
    \myalg{\algo{\texttt{id}, $Map_{counts}$}}{
        \nl \ForEach{$\text{$\bar{k}$} \in \texttt{\emph{SortByDecLength}}(Map_{counts}.keys)$}
        {
            \nl \If{$\bar{k}.length < \ell$}
            {
                \nl $\kappa \gets ({\bar{k}} << b) | \texttt{id}$\;
                \nl $Map_{counts}[\kappa] \gets Map_{counts}[\kappa] + Map_{counts}[\bar{k}]$\;
            }
            
        }
        \nl $Map_{counts}[\texttt{id}]\gets Map_{counts}[\texttt{id}] + 1$
    }{}
    \SetKwProg{myproc}{Function}{}{}
    \myproc{\proc{\texttt{id}, $Map_{counts}$}}{
        \nl $Map_{counts}[\texttt{id}]\gets Map_{counts}[\texttt{id}] - 1$ \;
        \nl \ForEach{$\text{$\bar{k}$} \in \texttt{\emph{SortByIncLength}}(Map_{counts}.keys)$}
        {
            \nl \If{$ \bar{k}.length < \ell - 1$}
            {
                \nl $\kappa \gets (\texttt{id} << (\bar{k}.length \cdot  b) )| \bar{k}$\;
                \nl $Map_{counts}[\kappa] \gets Map_{counts}[\kappa] - Map_{counts}[\bar{k}]$
            }
            
        }
    }
    \caption{Subroutines of \texttt{FastUpdate} }
    \label{alg:paranfast}
\end{algorithm}

    

\subsection{Exact Subgraph Enumeration}\label{subsubsec:exactRoutines}
In this section we briefly discuss the algorithms for subgraph enumeration that can be adapted to our Algorithm \ref{alg:samplebystatic} (in line \ref{alglinemain:exactsubgriso}). Unfortunately we cannot easily use the algorithms for extracting $k$-node motifs mentioned in Section \ref{sec:previousWorks} \emph{as is}, since they do not provide the local enumeration step required by \algname.

In fact, the problem most related to the exact enumeration we require is the \emph{labelled query graph matching problem}. In such setting one is provided a labelled query graph $H=(V_H,E_H,L_H)$, and a labelled graph $G=(V,E,L)$ (where labels can be colors for example, see \cite{Lee2012}), $L$ may be defined both on edges or vertices. The problem requires to find all the subgraphs $h'\subseteq G$ isomorphic to $H$, which could be either induced or not but must preserve the labelling properties (i.e., if $(x,y)\in E$ is mapped to $(x',y')\in H$ then $(L(x),L(y)) = (L_H(x'), L_H(y'))$). To explain how we take advantage of the algorithms developed for the problem above we need to introduce the following definitions (adapted from \cite{Pashanasangi2019}).
\begin{definition}\label{defn:vertex_orbits}
    Let $H=(V_H,E_H)$ be an undirected graph, an \emph{automorphism} is a bijection $\pi:V_H \mapsto V_H$ such that $(x,y)\in E_H$ iff $(\pi(x), \pi(y)) \in E_H$.
\end{definition}
\begin{definition}\label{defn:edge_automorphism}
    Let $H=(V_H,E_H)$ be an undirected graph, we say that two edges $e=(x,y), e'=(x',y') \in E_H$ belong to the same edge-orbit iff there exists an automorphism that maps $e$ on $e'$. 
\end{definition}
In order to adapt the algorithms for the labelled query graph matching problem we proceed in the following way: 1) colour the nodes of $G_T$ with a fixed colour (say red) 2) Once sampled $e_R \in G_T$, colour its endpoint nodes with a different colour (say blue), call the map from the last two points $L_{G_T}$; 3) compute the different edge-orbits of the pattern $H$ (by enumerating the automorphisms of $H$) and for each edge-orbit choose an edge, colour its endpoint nodes with the same colour assigned to $e_R$, and keep the colour on the other edges the same as $G_T$, call this map $L_H$; 4) run an algorithm for the labelled query graph matching problem with graph $G_T=(V_T,E_T,L_{G_T})$ and pattern $H=(V_H,E_H,L_H)$ 5) the desired subgraphs ($\mathcal{H}$) are the union over the different edge-orbits enumeration steps. 

\section{Implementation Details}\label{app:implementation}
In this section we provide additional implementation details, complementing the description of Section \ref{subsec:expsetup}.

In our implementation, we used two main structures: first, an adjacency list\footnote{We used the one provided by SNAP: \url{https://github.com/snap-stanford/snap}, more efficient implementations can be also adopted improving the global running times.}, that allows to query for an edge between $u,v\in V$ in $O(\log(\min({d_u,d_v})))$. Second, we used a hashmap to store for each static directed edge the timestamps of the temporal edges that map on that edge, leading to $O(1)$ complexity of querying for the timestamps of a static edge in $G_T$. The initialization of such structures is done in $O(1)$ per each processed temporal edge while loading the dataset, by knowing the number of nodes $n$. Many state of the art algorithms exist for the local enumeration of motifs (e.g., \cite{Sun2020, Ren2015, Han2013}), we provide in our code a general algorithm based on the algorithm VF2++ \cite{Juettner2018}. However, instead of using the general procedure described in Section~\ref{subsubsec:exactRoutines}, in our test we relied on a simple algorithm that locally enumerates the subgraphs containing an edge $e=\{x,y\}$ isomorphic to $H$:  for triangles the algorithm runs in $O(\min({d_x,d_y})\log(n))$, while when $H$ is a square the algorithm runs in $O(\min({d_x,d_y})d_{max}\log(n))$, with $d_{max}$ the maximum degree of a node in $G_T$.

\section{Proofs} \label{app:proofs}
In this section we provide the proofs not included in the main text. 

First we recall that $C_{M_i}(e)$ the number of $\delta$-instances of motif  $M_i, i=1,\dots,|\mathcal{M}(H,\ell)|$ from $T$ whose undirected projected static network contains edge $e\in G_T$, i.e., $C_{M_i}(e) = \sum_{h \subseteq G_T, h \simeq H: e \in h} |\mathcal{U}(h,M_i)|, e \in G_T$ where $\mathcal{U}(h,M_i)$ is the set of $\delta$-instances of motif $M_i$ whose static projected graph is $h\subseteq G_T$. Then based on the above it is simple to notice that the following formula holds for each motif $M_i,i=1,\dots,|\mathcal{M}(H,\ell)|$: $\sum_{e\in G_T} C_{M_i}(e) = |E_H|C_M$. This relation will be the key for proving the unbiasedness of the estimates provided by \algname, as we show next.
 
\begin{proof}[Proof of Lemma \ref{lemma:unbiased}]
    First let us consider the expectation of $X_{M_i}^j, i=1,\dots, |\mathcal{M}(H,\ell)|, j=1,\dots,s$:
    \[
    \begin{aligned}
    &\mathbb{E} \left[ \frac{1}{|E_H|} \sum_{e \in G_T}\frac{ C_{M_i}(e) X_e}{p_e} \right] = \frac{1}{|E_H|} \sum_{e \in G_T} \frac{C_{M_i}(e)\mathbb{E}[X_e]}{p_e} = C_{M_i}\\
    \end{aligned}
    \]
    where we used the linearity of expectation and the facts that $\mathbb{E}[X_e] = p_e, e\in G_T$, and $\sum_{e \in G_T} C_{M_i}(e)= |E_H| C_{M_i}$; thus $X_{M_i}^j, i=1,\dots, |\mathcal{M}(H,\ell)|,j=1,\dots,s$ are unbiased estimates of $C_{M_i}$, combining such result to $C_{M_i}'$ we obtain,
    \[
    \mathbb{E}[C_{M_i}'] = \mathbb{E} \left[ \frac{1}{s} \sum_{j=1}^s X_{M_i}^j \right] =  \frac{1}{s} \sum_{j=1}^s \mathbb{E}[X_{M_i}^j] =\frac{s C_{M_i}}{s}= C_{M_i}
    \]
    by the linearity of expectation.
\end{proof}

\begin{proof}[Proof of Lemma \ref{lemma:variancebound}]
    We need to bound the variance of the estimate $C_{M_i}'$, first we rewrite the estimator
    \[
    C_{M_i}' =  \frac{1}{s} \sum_{j=1}^s \frac{1}{|E_H|}\sum_{e \in G_T} C_{M_i}(e)\frac{X_e}{p_e} = 
    \frac{1}{s} \sum_{j=1}^s X_{M_i}^j
    \]
    Since the $s$ variables $X_{M_i}^j, j\in[1,s]$ are independent (edges are drawn independently at each iteration of the outer \texttt{for} loop in Algorithm \ref{alg:samplebystatic}), it holds
    $\text{var}(C_{M_i}') = \text{var}(\frac{1}{s} \sum_{j=1}^s X_{M_i}) = \frac{1}{s}\text{var}(X_{M_i}) $ we thus only need to compute the variance of the variable $X_{M_i}$. Let us recall $\text{var}(X_{M_i})=\mathbb{E}[X_{M_i}^2]- \mathbb{E}[X_{M_i}]^2 = \mathbb{E}[X_{M_i}^2]- C_{M_i}^2$ by the previous lemma. We will now bound $\mathbb{E}[X_{M_i}^2]$.
    \[
    \begin{aligned}
    &\mathbb{E}[X_{M_i}^2] = 
    \mathbb{E}\left[ \frac{1}{|E_H|^2} \sum_{e_1 \in G_T} \sum_{e_2 \in G_T} C_{M_i}(e_1)C_{M_i}(e_2)\frac{X_{e_1}X_{e_2}}{p_{e_1}p_{e_2}}\right] \\
    & = \frac{1}{|E_H|^2} \sum_{e_2 \in G_T} C_{M_i}^2(e_2)\frac{1}{p_{e_2}} 
    \leq \frac{1}{|E_H|^2} \sum_{e_2 \in G_T} C_{M_i}^2(e_2) \frac{m}{\alpha} = \\
    &=\frac{m}{\alpha |E_H|^2}\sum_{e_2 \in G_T}C_{M_i}^2(e_2) \overset{(1.)}{\leq} \frac{m}{\alpha |E_H|^2} |E_H| C_{M_i}^2 = \frac{mC_{M_i}^2}{\alpha|E_H|}
    \end{aligned}
    \]
    where we used the linearity of expectations, the fact that $\mathbb{E}[X_{e_1}X_{e_2}]= p_{e_1}$ only for $e_1=e_2$ otherwise is 0, a bound on the minimum probability $p_e$ where $p_e \leq \alpha/m, \forall e \in G_T$ for $\alpha$ defined as in Section \ref{sec:approx_analysis}. In $(1.)$ we used the fact that $C_{M_i}(e) = \lambda_e C_{M_i}, e \in G_T, \lambda_e \in [0,1]$, then $\sum_{e_2 \in G_T}C_{M_i}^2(e_2) = \sum_{e_2 \in G_T}\lambda_{e_2}^2 C_{M_i}^2 \leq C_{M_i}^2 \sum_{e_2 \in G_T}\lambda_{e_2} = |E_H| C_{M_i}^2$ since $\lambda_{e_2} \in [0,1]$ and further $\sum_{e \in G_T}\lambda_{e} = |E_H|$ by $\sum_{e \in G_T}\lambda_{e}C_{M_i} = |E_H|C_{M_i}$.
    
    Thus the variance of $X_{M_i}$ is bounded by: 
    \[
    \text{Var}(X_{M_i}) \leq \frac{mC_{M_i}^2}{\alpha|E_H|} - C_{M_i}^2 = C_{M_i}^2 \left(\frac{m}{\alpha|E_H|} -1\right)
    \]
    combining everything together we obtain that $
    \text{var}(C_{M_i}') \leq \frac{C_{M_i}^2}{s} $ $ \left(\frac{m}{\alpha|E_H|} -1\right)$, concluding the proof.
\end{proof}

\begin{proof}[Proof of Theorem \ref{theo:boundsamplesmultiple}]
    Let us fix $M_i, i\in[1,|\mathcal{M}(H,\ell)|]$ we first show a bound to the following probability $\mathbb{P}[|C_{M_i}' - C_{M_i}| \ge \varepsilon C_{M_i}]$.
    We want to derive such bound through the application of Bennett's inequality to the following summation: $\frac{1}{s} \sum_{j=1}^s X_{M_i}^j$, we already know that $\mathbb{E}[X_{M_i}^j] = C_{M_i}$ and $\mathbb{E}[(X_{M_i}^j - C_{M_i})^2] \leq C_{M_i}^2\left(\frac{m}{\alpha|E_H|} -1\right) = \hat{v}^2_j$ for $j=1,\dots,s$ it holds: 
    \[
    X_{M_i}^j= \frac{1}{|E_H|}\sum_{e \in G_T} C_{M_i}(e)\frac{X_e}{p_e} \le \frac{1}{|E_H|}\sum_{e \in G_T} C_{M_i}(e) \frac{m}{\alpha}= \frac{mC_{M_i}}{\alpha|E_H|}
    \]
    As argued by \cite{Sarpe2021PRESTO} Bennett's inequality holds even if we only have an upper bound on the variance of the estimates. Therefore let us compute the quantities to apply Bennett's bound (see \cite{Sarpe2021PRESTO} for the statement), clearly $B = C_{M_i}(\frac{m}{\alpha|E_H|}-1)$ combining what we already showed with the unbiasedness of $X_{M_i}^j$, moreover $v \le \hat{v}^2_j$ since the bound $\hat{v}^2_j$ is equal for each $j\in[1,s]$. Then, 
    \[
    \frac{\hat{v}^2_j}{B^2} = \frac{C_{M_i}^2\left(\frac{m}{\alpha|E_H|} -1\right)}{C_{M_i}^2(\frac{m}{\alpha|E_H|}-1)^2} = \frac{1}{(\frac{m}{\alpha|E_H|}-1)}
    \]
    also 
    \[
    \frac{tB}{\hat{v}^2_j} = \frac{\varepsilon C_{M_i} C_{M_i}(\frac{m}{\alpha|E_H|}-1)}{C_{M_i}^2\left(\frac{m}{\alpha|E_H|} -1\right)} = \varepsilon
    \]
    Combining everything together by Bennett's inequality we obtain,
    \begin{equation}\label{eq:boundBennett}
    \mathbb{P} \left( \left|\frac{1}{s}\sum_{j=1}^s X_{M_i}^j - C_{M_i} \right| \ge \varepsilon C_{M_i} \right) \le 2\exp \left( - \frac{s}{(\frac{m}{\alpha|E_H|}-1)} h(\varepsilon) \right)
    \end{equation}
    
    Now, let $A_i = ``|C_{M_i}' - C_{M_i}| \ge \varepsilon C_{M_i}", i=1,\dots,|\mathcal{M}(H,\ell)|$, namely $A_i$ is the event that the estimate of motif $M_i, i=1,\dots,|\mathcal{M}(H,\ell)|$ is distant more than $\varepsilon C_{M_i}$ from $C_{M_i}$. We already showed that that for an arbitrary $A_i$ inequality \eqref{eq:boundBennett} holds for $\mathbb{P}[A_i]$, so
    \[
    \begin{aligned}
    &\mathbb{P}\left(\bigcup_{i=1}^{|\mathcal{M}(H,\ell)|} A_i\right) \leq \sum_{i=1}^{|\mathcal{M}(H,\ell)|}\mathbb{P}[A_i] \leq\\
    &\leq{|\mathcal{M}(H,\ell)|} 2\exp \left( - \frac{s}{(\frac{m}{\alpha|E_H|}-1)} h(\varepsilon) \right) \leq \eta
    \end{aligned}
    \]
    combining the union bound and the choice of $s$ as in statement.
\end{proof}
\section{Case Study - Motif analysis}\label{app:CaseStudy}
\begin{figure}[h]
    \centering
        \includegraphics[width=\linewidth]{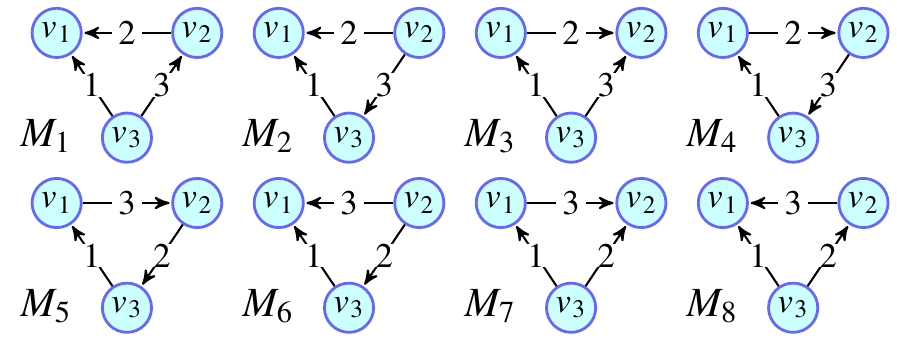}
        
    \caption{Graphical representation of the motifs in Figure (\ref{subfig:countSnapshots}).}
    \label{fig:triangleMap}
\end{figure}
\subsubsection*{Motifs on the Snapshots of the Facebook Network.} Thanks to our analysis of Section \ref{subsec:caseStudy} we are able to characterize the user behaviour on the Facebook network of wall posts by looking at different motifs (topology and their orderings) and their counts. We first show in Fig. \ref{fig:triangleMap} the motifs corresponding to the labels of Figure (\ref{subfig:countSnapshots}) in Section \ref{subsec:caseStudy}. Then, let $H=\{v_1,v_2,v_3\}$ be a triangle, the most frequent motifs (i.e., those with the highest normalized counts on each snapshot) seem to share a common pattern: a first node ($v_3$) after posting on $v_1$'s (or $v_2$'s) wall triggers $v_1$ (or $v_2$'s) to post on the remaining node's wall with $v_1$ posting also on such node's wall to close the triangle, as captured by motifs $M_3$, $M_7$ and $M_8$. Observe that by identifying the users that mostly act as $v_3$ in the occurrences of such frequent motifs one is able to identify, for example, the nodes more engaged in spreading most of the information over the Facebook network in a short period of time (recall that we set $\delta$ to one day). Not surprisingly motif $M_5$ is the less frequent one since its occurrences require node $v_2$ to post on $v_3$'wall before receiving the post from $v_2$ therefore without being \enquote{triggered} by such node, that received the post from $v_3$. Interestingly, without considering the orderings of occurrence among such patterns we will not be able to distinguish between the most frequent motifs and the least frequent ones since for example $M_4$ and $M_5$ have the same static \emph{directed} graph structure but they have very different counts on the different snapshots of the Facebook network.

\subsubsection*{Motifs with varying $\ell$ - Frequent vs Infrequent.}
\begin{figure}[h]
    \centering
    \includegraphics[width=\linewidth]{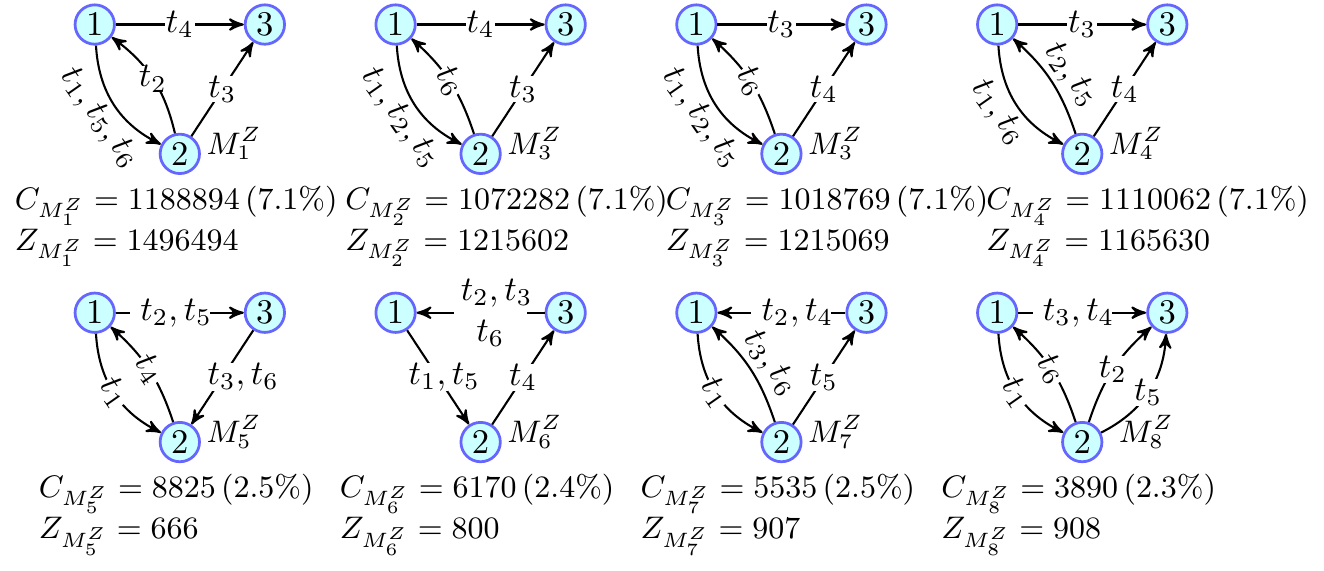}
    
    \caption{Graphical representation of the 4 motifs with highest (top) and lowest (bottom) $Z$-scores in Figure (\ref{subfig:ZscoresFB}) for $\ell=6$. For each motif we report the exact count (which we computed for such representation) and the relative error in the approximation obtained with \algname\ in brackets, we additionally report each $Z$-score of the motif as obtained from Section \ref{subsec:caseStudy} (i.e., by using only \algname).}
    \label{fig:toplowZscores}
\end{figure}
In this Section we briefly discuss the properties and show visually the motifs with highest and lowest $Z$-scores obtained in Section \ref{subsec:caseStudy} on the Facebook wall post network for $\ell=6$. The motifs are reported in Figure \ref{fig:toplowZscores}, where we report the 4-top motifs ranked by $Z$-score on the top and the 4-lowest motifs by $Z$-scores on the bottom. Note, that the top 4 motifs share a similar structure, both temporal and topological. Interestingly in the original paper \cite{Viswanath2009} the authors noted that there were very few pair of nodes that exchanged more than 5 messages (with median 2). The most frequent temporal motifs seem to involve a pair of highly active nodes (which exchanged many messages between them, i.e., more than 4) and another third node that is reached by such pair of nodes. We unfortunately do not have the original messages to understand better the information captured by such frequent motifs (since we do not have the original posts), but it is really surprising that the top 4 motifs all share similar properties especially in the orderings of their edges. Additionally, it seems that triangles involving nodes that are pairwise very active seem to be the rarest type of interaction as captured by the 4 motifs with lowest $Z$-score, reported in Figure \ref{fig:toplowZscores} bottom.
\end{document}